\newcommand{\bs}[1]{\textcolor{black}{#1}}
\newcommand{\R}{{\rm I\!R}}
\newtheorem{theorem}{Theorem}
\newtheorem{lemma}{Lemma}
\newtheorem{proposition}{Proposition}
\newtheorem{corollary}{Corollary}
\newtheorem{assumption}{Assumption}
\newtheorem{remark}{Remark}
\newcommand{\diag}{\text{diag}}
\newtheorem{definition}{Definition}
\begin{document}


\title{Peak Infection Time for a Networked SIR Epidemic
with Opinion Dynamics}

\author{
Baike She, 
Humphrey C. H. Leung, 
Shreyas Sundaram, 
and 
Philip E. Par\'{e}*
\thanks{*Baike She, Humphrey C. H. Leung, Shreyas Sundaram, and Philip E. Par\'{e} are with the Elmore Family School of Electrical and Computer Engineering at Purdue University.
E-mails: \{bshe, leung61, sundara2, philpare\}@purdue.edu.
Research supported in part by the C3.ai Digital Transformation Institute sponsored by C3.ai Inc. and the Microsoft Corporation, and in part 
   by the National Science Foundation, grants NSF-CMMI \#1635014, 
  NSF-CNS \#2028738, 
   and NSF-ECCS \#2032258.
}

}

\maketitle


\begin{abstract}
We propose an SIR epidemic model coupled with opinion dynamics to study an epidemic and opinions spreading in a network of communities. Our model couples networked SIR epidemic dynamics with opinions towards the severity of the epidemic, and vice versa. We develop an epidemic-opinion based threshold condition to capture the moment when a weighted average of the epidemic states starts to decrease exponentially fast over the network, namely the peak infection time. We define an effective reproduction number to characterize the behavior of the model through 
the peak infection time. 
We use both analytical and simulation-based results to illustrate that 
the opinions 
reflect the recovered levels within the communities after the epidemic dies out.
\end{abstract}


\section{Introduction}
\label{intro}
The COVID-19 pandemic has caused severe suffering across the world in both public health and the economy.
These hardships have motivated researchers from various backgrounds to study the viral pathogenesis, epidemic spreading models, mitigation strategies\cite{cao2020covid,anderson2020will}, etc. 
Besides the COVID-19 pandemic, it is relevant to build dynamic models to study viral spreading processes to predict future outbreaks and to design control algorithms to mitigate the epidemic \cite{nowzari2016epidemics}. One of the popular ways to capture 
viral spreading processes is 
by
using 
network-based compartmental models\cite{mei2017epidemics_review}. In networked epidemic models, the infection rates, healing rates, and network structures all play important roles in determining the behaviors of the epidemic spreading processes. Recently, social factors such as human awareness\cite{Paarporn_tcss}, 
opinion interactions\cite{weihao2020_opinion}, 
etc., are being
taken into consideration when modeling epidemic spreading over networks. In this work, we will consider the classical networked Susceptible-Infected-Recovered (SIR) model coupled with opinion dynamics.

In previous works, researchers have studied the networked SIR models from different perspectives. 
In \cite{mei2017epidemics_review}, the authors study
the dynamical behaviors of the networked SIR model, and 
analyze
the threshold conditions for an epidemic to increase or decrease. In \cite{hota2020closed}, 
the authors leverage testing data to estimate the key parameters of the networked SIR model to design resource allocation methods to mitigate the epidemic. 

As mentioned before, 
people's beliefs in the seriousness of the epidemic 
is one important social factor that
will have an impact on the 
spreading process. 
For example, \cite{weitz2020awareness} studies the correlations between the awareness-driven behaviors during the COVID-19 pandemic and the spreading of the COVID-19. Further, \cite{bhowmick2020influence} constructs a multiplex network with a networked SEIV model coupled with opinion dynamics, then explores the disease-free equilibrium. Inspired by the health-belief model developed by social scientists \cite{healthbelief}, where people's behavior in the pandemic will be influenced by their beliefs in the seriousness of the epidemic, \cite{weihao2020_opinion} and \cite{she2021network} develop a networked SIS model with cooperative opinion dynamics, and both cooperative and antagonistic opinion dynamics, respectively. The authors in \cite{weihao2020_opinion} study both the disease-free and non-disease-free equilibria of the model. Our previous work, \cite{she2021network}, defines an opinion-dependent reproduction number to explore further the effect of the antagonistic opinions in epidemic spreading. Based on the health-belief model, we will develop a networked SIR model coupled with cooperative opinion dynamics.

The main contributions of this work can be summarized as follows: 
we define a networked SIR epidemic model coupled with opinion dynamics.
Then, we develop two concepts: 
an effective reproduction number and a peak infection time. 
We utilize the 
effective reproduction number to explore epidemic spreading by studying the peak infection time. 
In particular, different from the previous works \cite{weihao2020_opinion,she2021network}, where stability and convergence of the equilibria are the main focuses, this work emphasizes more on exploring the transient behavior of the epidemic, characterized by the effective reproduction number and the peak infection time.
Additionally, we further analyze the opinion states via the behavior of the epidemic.

We organize the paper as follows: In Section II, we introduce the networked SIR model coupled with opinion dynamics and formulate the problems of interest; Section III studies the equilibrium of the developed model. Based on the model, Section III defines the effective reproduction number and peak infection time.
Section III further explores the epidemic's dynamical behavior by relating the effective reproduction number and the peak infection time.
Section IV illustrates the results of the paper through simulations. 
Section V concludes the paper and outlines research directions.
\vspace{-5ex}
\subsection*{Notation}
For any positive integer $n$, we use $[n]$ to denote the index set
$\left\{ 1,2,\ldots,n\right\} $. We view vectors as column vectors
and write $x^{\top}$ to denote the transpose of a column vector $x$. 
For a vector $x$, we use $x_{i}$ to denote the $i$th entry.
For any matrix $M\in\R^{n\times n}$, we use $[M]_{i,:}$, $[M]_{:,j}$,
$[M]_{ij}$, to denote its $i$th row, $j$th column and $ij$th
entry, respectively. We use $\tilde{M}=\diag\left\{ m_{1},\ldots,m_{n}\right\} $
to represent a diagonal matrix $\tilde{M}\in\R^{n\times n}$ with
$[\tilde{M}]_{ii}=m_{i}$, $\forall i\in\left[n\right]$. We use $\boldsymbol{0}$
and $\boldsymbol{1}_n$ to denote the vectors whose entries all equal 0 and 1,
respectively, and $I$ to denote the identity matrix. The dimensions
of the vectors and matrices are to be understood from the context.

For a real square matrix $M$, we use $\rho\left(M\right)$ and $\sigma\left(M\right)$ 
to denote
its spectral radius and spectral abscissa (the largest real part among its eigenvalues), respectively. For any two vectors $v,w\in\R^{n}$,
we write $v\geq w$ if $v_{i}\geq w_{i}$, 
and $v\gg w$ if $v_{i}>w_{i}$, $\forall i\in\left[n\right]$.
The comparison notations between vectors are used for matrices as well, 
for instance, for $A,B\in\R^{n\times n}$, $A\gg B$ indicates
that $A_{ij}>B_{ij}$, $\forall i,j\in\left[n\right]$. 
%

Consider a directed graph $\mathcal{G}=\left(\mathcal{V},\mathcal{E}\right)$, with the node set $\mathcal{V}=\left\{ v_{1},\ldots,v_{n}\right\} $
and the edge set $\mathcal{E}\subseteq \mathcal{V}\times \mathcal{V}$. Let
matrix $A \in\R^{n\times n}$, $[A]_{ij}=a_{ij}$, denote the adjacency matrix of $\mathcal{G}=\left(\mathcal{V},\mathcal{E}\right)$, where $a_{ij}\in\R_{>0}$
if $\left(v_{j},v_{i}\right)\in \mathcal{E}$ and $a_{ij}=0$ otherwise.
Graph $\mathcal{G}$ does not allow self-loops, i.e., $a_{ii}=0,$ $\forall i\in \left[n\right]$. Let
$k_{i}=\sum_{j\in \mathcal{N}_{i}}\left|a_{ij}\right|$, where $\mathcal{N}_{i}=\left\{ \left.v_{j}\right\vert \left(v_{j},v_{i}\right)\in \mathcal{E}\right\}$
denotes the neighbor set of $v_{i}$ and $\left|a_{ij}\right|$ denotes
the absolute value of $a_{ij}$. The graph Laplacian of $\mathcal{G}$ is defined
as $L\triangleq \tilde{K}-A$, where $\tilde{K}\triangleq \diag\left\{ k_{1},\ldots,k_{n}\right\}$.
\vspace{-4ex}
\section{Modeling and Problem Formulation}
In this section, we 
introduce the networked SIR model coupled with opinion dynamics. We 
also
formulate the problem to be analyzed in this work.

We start by defining a \textit{disease transmission network} $\mathcal{G}=\left(\mathcal{V},\mathcal{E}\right)$ as a weighted directed graph with a node set $\mathcal{V}=\left\{v_{1},\ldots,v_{n}\right\}$ representing $n$ disjoint communities and the edge set $\mathcal{E}\subseteq \mathcal{V}\times \mathcal{V}$ representing disease-transmitting contacts over $\mathcal{V}$. We denote the weight of each edge $(v_j, v_i)$ as $\beta_{ij}$.
Then, a basic continuous-time networked SIR model on graph $\mathcal{G}$, which was studied in \cite{mei2017epidemics_review}, can be defined as:
\vspace{-1ex}
\begin{subequations}
\begin{alignat}{3}
   \dot{s_{i}}\left(t\right) &= -s_{i}\left(t\right)\sum_{j\in\mathcal{N}_{i}}\beta_{ij}x_{j}\left(t\right), \label{eq:S}\\
    \dot{x_{i}}\left(t\right) &= s_{i}\left(t\right)\sum_{j\in\mathcal{N}_{i}}\beta_{ij}x_{j}\left(t\right) -\gamma_{i}x_{i}\left(t\right),\label{eq:I}\\
    \dot{r}_{i}\left(t\right) &= \gamma_{i}x_{i}\left(t\right)\label{eq:R},
\end{alignat}
\end{subequations}
where $(s_i(t), x_i(t), r_i(t)) \in [0, 1]$, $\forall i\in[n]$ are the states indicating the proportion of susceptible, infected, and recovered population in community $i \in 
[n]$ at time $t\geq0$, respectively. Moreover, $\beta_{ij} \in \mathbb{R}_{\geq 0}$ is the \textit{transmission rate} from community $j$ to $i$, and $\gamma_{i} \in \mathbb{R}_{\geq 0}$ is the \textit{recovery rate} of community $i$. Note that \eqref{eq:S}-\eqref{eq:R} 
satisfy $s(t) + x(t) + r(t) = 1\ \forall t\geq t_0 \in \mathbb{R}_{\geq0}$ as a result of the assumption that $\exists t_0 \in \mathbb{R}_{\geq0}$ such that $s(t_0) + x(t_0) + r(t_0) = 1$ and $\dot{s}(t) + \dot{x}(t) + \dot{r}(t) = 0\ \forall t \in \mathbb{R}_{\geq0}$.

Similarly, we define 
the
\textit{opinion spreading network} as a directed graph $\bar{\mathcal{G}}=\left(\mathcal{V},\bar{\mathcal{E}}\right)$, where the edge set $\bar{\mathcal{E}}\subseteq \mathcal{V}\times \mathcal{V}$ represents the opinion-disseminating interactions over the same $n$ communities. Each edge in the graph is weighted by $\bar{a}_{ij} \in \mathbb{R}_{\geq 0}$ indicating the opinion-disseminating influence from node $j$ to node $i$. Let $o_i(t)\in[0,1]$, $\forall i\in[n]$, $t\geq 0$, denote the belief of community $i$ on the severity of the epidemic at time $t$, where $o_i(t)=1$ indicates community $i$ considers the epidemic to be extremely serious, while $o_i(t)=0$ implies
community $i$ believes
the epidemic is not serious at all. We adapt 
Abelson’s models of opinion dynamics from 
\cite[Equation (10)]{proskurnikov2017tutorial}, where $\bar{\mathcal{N}}_{i}=\left\{ \left.v_{j}\right\vert \left(v_{j},v_{i}\right)\in \bar{\mathcal{E}}\right\}$:
\vspace{-1ex}
\begin{equation} \label{eq:O}
\dot{o}_{i}\left(t\right)=\sum_{j\in \mathcal{\bar{N}}_{i}}\bar{a}_{ij}\left(o_{j}\left(t\right)-o_{i}\left(t\right)\right).
\end{equation}

We assume that the $n$ communities share a homogeneous minimum incoming transmission rate $\beta_{\min}$ and a homogeneous recovery rate $\gamma_{\min}$, where $\beta_{\min}$ corresponds to the strongest belief of a community in the severity of the epidemic $o_i(t) = 1$, while $\gamma_{\min}$ corresponds to the weakest belief of a community in the severity of the epidemic $o_i(t) = 0$.
To couple the networked SIR model with the opinion dynamics, we employ the health-belief model, which is  the best known and most widely used theory in health behavior research~\cite{healthbelief}. The health-belief model proposes people's beliefs\footnote{ In this article, beliefs, attitudes, and opinions are used interchangeably.} about health problems, perceived benefits of actions, and/or perceived barriers to actions that can explain their engagement, or lack thereof, in health-promoting behavior. Therefore, people's beliefs in their perceived susceptibility and/or in their perceived severity of the illness affect how susceptible they are and/or how effective they will be at healing from these epidemics. We 
define a networked SIR model influenced by the opinion dynamics as:
\begin{subequations}
\small
\begin{alignat}{3}
 \dot{s_{i}}\left(t\right) &= -s_{i}\left(t\right)\sum_{j\in\mathcal{N}_i}\left(\beta_{ij}-(\beta_{ij}-\beta_{\min})o_i(t)\right)x_{j}\left(t\right),  \label{eq:S-O}\\
\dot{x_{i}}\left(t\right) &=s_{i}\left(t\right)\sum_{j\in\mathcal{N}_i}\left(\beta_{ij}-(\beta_{ij}-\beta_{\min})o_i(t)\right)x_{j}\left(t\right)\nonumber\\
& \ \ \ \ -\left(\gamma_{\min}+(\gamma_i-\gamma_{\min})o_{i}(t)\right)x_{i}\left(t\right).\label{eq:I-O}
\end{alignat}
\end{subequations}
In \eqref{eq:S-O} and \eqref{eq:I-O}, the transmission rate of community $i$, 
is obtained through the linear interpolation between $\beta_{ij}$ and $\beta_{min}$ scaled by the level of community $i$'s belief in the seriousness of the epidemic, $o_{i}(t)$. 
A higher $o_{i}(t)$ will lead to lower transmission rates for community $i$. A similar interpretation can apply to the healing rate of community $i$ which is scaled by the level of community $i$'s belief in the seriousness of the epidemic, with a higher opinion state 
leading to a
higher healing rate of community $i$.

Notice that $(1-s_{i}(t))=x_{i}(t)+r_{i}(t)$, $t\geq0$, $\forall i\in[n]$, captures the proportion of the population that are infected/have been infected with the epidemic. Hence, $(1-s_{i}(t))$ captures the 
\textit{infection level}
within community~$i$. By modifying the opinion dynamics in \eqref{eq:O} via cooperating the infection level:
\vspace{-1ex}
\begin{equation}
\dot{o_{i}}\left(t\right)  =\left(1-s_{i}\left(t\right)-o_{i}\left(t\right)\right)+\sum_{j\in\bar{\mathcal{N}}_i}\left(o_{j}\left(t\right)-o_{i}\left(t\right)\right), \label{eq:O-I}
\end{equation}
where a higher proportion of the infected plus recovered population within community $i$ will lead to 
a stronger belief in the seriousness of the epidemic, and vice versa.

We have presented the epidemic-opinion model in \eqref{eq:S-O}-\eqref{eq:O-I}, then, we can state the problem of interest in this work. We are interested in exploring the mutual influence between the epidemic 
spreading over 
the graph $\mathcal{G}$ of
$n$ communities 
in \eqref{eq:S-O} and \eqref{eq:I-O}, and the opinions of the $n$ communities about the epidemic captured by graph $\bar{\mathcal{G}}$ in \eqref{eq:O-I}.
In this paper, we will:
\begin{enumerate}
    \item analyze the equilibria of the system in \eqref{eq:S-O}-\eqref{eq:O-I}. 
    In particular, we connect the opinion states at the equilibrium to the infection level of the communities;
    \item define an 
     effective reproduction number
    to characterize the spreading of the disease. In particular, we
    explore the transient behavior of the epidemic-opinion model by leveraging peak infection time; 
    \item illustrate the results through simulations.
\end{enumerate}
The analysis presented in this work can provide insights for decision-makers who aim to analyze disease spreading 
and its coupling with
the public's opinion towards the epidemic.
\label{section2}
\vspace{-5ex}
\section{Main Results}
\label{section3}
This section examines the mutual influence between the epidemic dynamics in \eqref{eq:S-O} and \eqref{eq:I-O}, and the opinion dynamics in \eqref{eq:O-I}. 
Particularly,
we construct the compact form of the incorporated system to define an effective reproduction number to explore the peak infection time of the 
model. We also 
analyze the evolution of the epidemic 
by using the effective reproduction number and peak infection time. 

We write \eqref{eq:S-O}, \eqref{eq:I-O}, and \eqref{eq:O-I} in a compact
form as follows:
\vspace{-1ex}
\begin{subequations}
\small
\begin{alignat}{3}
\label{eq:S-O-C}
\dot{s}\left(t\right)&=-\left(\tilde{S}\left(t\right)\left(B-\tilde{O}\left(t\right)\left(B-B_{\min}\right)\right)\right)x\left(t\right), 
\\
\dot{x}\left(t\right) & =\left(\tilde{S}\left(t\right)\left(B-\tilde{O}\left(t\right)\left(B-B_{\min}\right)\right)\right)x\left(t\right)\nonumber\\
 & \ \ \ \ \ -\left(G_{\min}+\left(G-G_{\min}\right)\tilde{O}\left(t\right)\right)x\left(t\right),\label{eq:I-O-C}
\\
\label{eq:O-I-C}
\dot{o}\left(t\right)&=\left(\boldsymbol{1}_{n}-s\left(t\right)\right)-\left(\bar{L}+I_{n}\right)o\left(t\right),
\end{alignat}
\end{subequations}
where $\tilde{S}\left(t\right)=\diag\left(s\left(t\right)\right)$, $\tilde{O}\left(t\right)=\diag\left(o\left(t\right)\right)$, 
$G_{\min}$ and $G$ are diagonal
matrices, with $\left[G_{\min}\right]_{ii}=\gamma_{\min}$,
and $[G]_{ii}=\gamma_{i}$, $\forall i \in[n]$. Note that $\bar{L}$ is the Laplacian matrix of the opinion spreading graph $\mathcal{\bar{G}}$.
By defining $B\left(o(t)\right)=(B-\tilde{O}\left(t\right)\left(B-B_{\min}\right))$, $G\left(o(t)\right)=(G_{\min}+\left(G-G_{\min}\right)\tilde{O}\left(t\right))$,
\vspace{-1ex}
\begin{subequations}
\small
\begin{alignat}{2}
\label{S-O-Final}
\dot{s}\left(t\right) &=-\left(\tilde{S}\left(t\right)B\left(o(t)\right)\right)x\left(t\right),
\\
\label{I-O-Final}
\dot{x}\left(t\right) & =\tilde{S}\left(t\right)B\left(o(t)\right)x\left(t\right)-G\left(o(t)\right)x\left(t\right).
\end{alignat}
\end{subequations}

For the epidemic spreading process, we assume that community $i$ can pass the virus to community $j$ through at least one directed path in the network $\mathcal{G}$, $\forall i, j \in[n]$, $i\neq j$. For the opinion spreading process, we assume that community $i$ can affect community $j$'s opinion through at least one directed path in $\mathcal{\bar{G}}$. Therefore, we have the following assumption for the epidemic and opinion spreading over the communities:

\begin{assumption}
\label{A1}
Suppose $\forall i\in[n]$, 
$s_{i}(0), x_{i}(0),  o_{i}(0)\in[0,1]$, $s_i(0)+x_i(0)+r_i(0)=1$,
$\gamma_{i}\ge\gamma_{{\rm min}}>0$, and $\beta_{ij}\ge\beta_{{\rm min}}>0$, $\forall j\in \mathcal{N}_{i}$. Further, both $\mathcal{G}$ and $\bar{\mathcal{G}}$ are strongly connected.
\end{assumption}
\vspace{-3ex}
\subsection{Equilibrium}
First we show the  model in \eqref{eq:S-O-C}-\eqref{eq:O-I-C} is well-defined. 


\begin{lemma}
\label{lem: well-defined}
For the epidemic-opinion model in \eqref{eq:S-O-C}-\eqref{eq:O-I-C}, if $(s_{i}(0), x_{i}(0), o_{i}(0))\in[0,1]$, and $s_i(0)+x_i(0)+r_i(0)=1$, then $(s_{i}(t), x_{i}(t), o_{i}(t))\in[0,1]$, $\forall t>0$, $\forall i\in[n]$. 
\end{lemma}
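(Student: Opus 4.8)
The plan is to establish invariance of the unit hypercube $[0,1]^{3n}$ (in the $(s,x,o)$ coordinates, with $r$ determined by $s+x+r=\1_n$) by checking, on each face of the cube, that the corresponding vector field component points inward (or is tangent). This is the standard Nagumo-type / subtangentiality argument for nonnegativity and upper bounds of compartmental models, and here it decouples nicely component by component because each $\dot{(\cdot)}_i$ is, on the relevant face, a function whose sign we can read off directly.

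First I would handle the susceptible states. From \eqref{eq:S-O}, $\dot s_i = -s_i \sum_{j\in\mathcal N_i}(\beta_{ij}-(\beta_{ij}-\beta_{\min})o_i)x_j$. I would first argue $o_i(t)\in[0,1]$ keeps the effective rate $\beta_{ij}-(\beta_{ij}-\beta_{\min})o_i$ in $[\beta_{\min},\beta_{ij}]\subseteq\R_{\ge0}$ (using Assumption~\ref{A1}); granting for the moment that $x_j\ge0$, we get $\dot s_i\le 0$ whenever $s_i\ge0$, and $\dot s_i=0$ when $s_i=0$. Hence $s_i$ cannot leave $[0,1]$ through either boundary: at $s_i=0$ the field is tangent, and $s_i$ is nonincreasing so it never exceeds its initial value $\le1$. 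For the infected states, from \eqref{eq:I-O}, when $x_i=0$ we have $\dot x_i = s_i\sum_{j\in\mathcal N_i}(\beta_{ij}-(\beta_{ij}-\beta_{\min})o_i)x_j \ge 0$ (again using $x_j\ge0$ and $s_i\ge0$), so $x_i$ cannot become negative. For the upper bound on $x_i$, I would use $s_i+x_i+r_i=1$ together with $\dot r_i=(\gamma_{\min}+(\gamma_i-\gamma_{\min})o_i)x_i\ge0$: since $r_i$ is nondecreasing and $s_i,r_i\ge0$, we get $x_i=1-s_i-r_i\le1$. Finally, the opinion states: from \eqref{eq:O-I}, when $o_i=0$, $\dot o_i=(1-s_i)+\sum_{j\in\bar{\mathcal N}_i}o_j\ge0$ since $1-s_i=x_i+r_i\ge0$ and $o_j\ge0$; when $o_i=1$, $\dot o_i=(1-s_i-1)+\sum_{j\in\bar{\mathcal N}_i}(o_j-1)=-s_i+\sum_{j\in\bar{\mathcal N}_i}(o_j-1)\le0$ since $s_i\ge0$ and $o_j\le1$. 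So $o_i$ is trapped in $[0,1]$ as well.

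The one genuine subtlety — and the step I would treat most carefully — is that these sign arguments are mutually dependent: the bound on $\dot s_i$ uses $x_j\ge0$, the bound on $\dot x_i$ at $x_i=0$ uses $s_i\ge0$ and $o_i\in[0,1]$, the bound on $\dot o_i$ uses $s_i\le1$, etc. To make this rigorous rather than circular, I would phrase it as a single invariance statement for the closed convex set $\mathcal K=\{(s,x,o): 0\le s\le\1_n,\ x\ge\0,\ 0\le o\le\1_n,\ s+x\le\1_n\}$ and verify Nagumo's condition simultaneously on every face: on each bounding hyperplane of $\mathcal K$, the vector field $(\dot s,\dot x,\dot o)$ evaluated at a point of $\mathcal K$ lying on that face has the correct (inward or tangential) sign, where crucially every term needed for that sign check is guaranteed by the defining inequalities of $\mathcal K$ itself — not by any not-yet-proven claim. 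Since $\mathcal K\subseteq[0,1]^{3n}$ and the right-hand sides of \eqref{eq:S-O-C}–\eqref{eq:O-I-C} are locally Lipschitz (they are polynomial in $(s,x,o)$, with $\bar L$ constant), local existence and uniqueness of solutions plus this subtangential condition give forward invariance of $\mathcal K$, hence $(s_i(t),x_i(t),o_i(t))\in[0,1]$ for all $t>0$. I would also note in passing that invariance of $\mathcal K$ together with $\dot s+\dot x+\dot r=0$ re-derives $s_i+x_i+r_i=1$, so $r_i(t)\in[0,1]$ comes for free and the model is fully well-defined.
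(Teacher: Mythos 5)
Your proposal is correct and takes essentially the same route as the paper, which only sketches its proof: a Nagumo/subtangentiality argument (citing \cite{blanchini2008set}) that the polynomial, hence locally Lipschitz, vector field points inward or is tangent on the boundary of a compact invariant set. Your one refinement—replacing the box $[0,1]^{3n}$ by the set $\mathcal{K}$ carrying the extra constraint $s+x\le\boldsymbol{1}_n$—is a sensible tightening, since the upper bound $x_i\le 1$ genuinely requires the conservation $s_i+x_i+r_i=1$ (equivalently $r_i\ge 0$) rather than the box faces alone, and it also cleanly resolves the apparent circularity among the componentwise sign checks.
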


The system defined in \eqref{eq:S-O-C}-\eqref{eq:O-I-C} is a group of polynomial ODEs defined over the compact set $[0,1]^{3n}$. Hence, the system 
is Lipschitz over  $[0,1]^{3n}$. 
To prove Lemma \ref{lem: well-defined}, we can show that all gradient vector fields on the boundary of the set $[0,1]^{3n}$ are either pointing towards the set's interior or tangential to the boundary~\cite{blanchini2008set}. 
The proof is similar to the proof of {\cite [Lemma 7]{she2021network}} and thus omitted here.

\begin{lemma}
\label{lem:decreasing}
If $s_{i}(0), x_{i}(0),  o_{i}(0)\in[0,1]$, $\forall i\in [n]$, the susceptible states, $s_{i}(t)$, are
monotonically decreasing.
\end{lemma}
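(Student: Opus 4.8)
The plan is to show that $\dot{s}_i(t) \le 0$ for all $t \ge 0$ directly from the dynamics in \eqref{eq:S-O}. From \eqref{eq:S-O}, we have
\begin{equation}
\dot{s}_i(t) = -s_i(t)\sum_{j\in\mathcal{N}_i}\bigl(\beta_{ij}-(\beta_{ij}-\beta_{\min})o_i(t)\bigr)x_j(t).
\end{equation}
Every factor on the right-hand side is nonnegative: by Lemma~\ref{lem: well-defined}, $s_i(t)\in[0,1]$, $x_j(t)\in[0,1]$, and $o_i(t)\in[0,1]$, while under Assumption~\ref{A1} we have $\beta_{ij}\ge\beta_{\min}>0$, so the coefficient $\beta_{ij}-(\beta_{ij}-\beta_{\min})o_i(t) = (1-o_i(t))\beta_{ij} + o_i(t)\beta_{\min}$ is a convex combination of $\beta_{ij}$ and $\beta_{\min}$, hence lies in $[\beta_{\min},\beta_{ij}]\subseteq\mathbb{R}_{>0}$. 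Therefore the entire sum is nonnegative and $\dot{s}_i(t)\le 0$ for all $t\ge 0$ and all $i\in[n]$, which is exactly the claimed monotonic decrease.

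The key steps, in order, are: (i) invoke Lemma~\ref{lem: well-defined} to guarantee that $s_i(t)$, $x_j(t)$, and $o_i(t)$ remain in $[0,1]$ for all $t>0$, so the sign analysis is valid along the whole trajectory; (ii) rewrite the opinion-scaled transmission coefficient as the convex combination $(1-o_i(t))\beta_{ij}+o_i(t)\beta_{\min}$ and note it is bounded below by $\beta_{\min}>0$; (iii) conclude that each summand $\bigl(\beta_{ij}-(\beta_{ij}-\beta_{\min})o_i(t)\bigr)x_j(t)$ is nonnegative, hence so is $s_i(t)$ times the sum; (iv) read off $\dot{s}_i(t)\le 0$.

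There is essentially no obstacle here — the only subtlety is making sure the state confinement to $[0,1]^{3n}$ is in hand before arguing about signs, which is precisely why Lemma~\ref{lem: well-defined} is proved first. One could optionally remark that the decrease is strict whenever $s_i(t)>0$ and $x_j(t)>0$ for some $j\in\mathcal{N}_i$, but the weak (non-strict) monotonicity in the statement follows immediately from the nonnegativity argument above.
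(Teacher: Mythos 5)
Your proposal is correct and follows essentially the same route as the paper: both rely on Lemma~\ref{lem: well-defined} to confine the states to $[0,1]$ and on Assumption~\ref{A1} ($\beta_{ij}\ge\beta_{\min}>0$) to conclude the right-hand side of the susceptible equation is non-positive, the paper arguing with the non-positivity of the matrix $-\tilde{S}(t)(B-\tilde{O}(t)(B-B_{\min}))$ in the compact form while you make the same sign argument entrywise via the convex-combination rewriting of the opinion-scaled transmission coefficient. No gap; your version is simply a slightly more explicit scalar rendering of the paper's argument.
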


\begin{proof}
Lemma \ref{lem: well-defined} indicates that $x(t)\in [0,1]^n$  $\forall t\geq0$. From Assumption \ref{A1}, the matrix $-(\tilde{S}(t)(B-\tilde{O}(t)(B-B_{\min})))$ is non-positive, therefore, the RHS of \eqref{eq:S-O-C} is always non-positive, which completes the proof.
\end{proof}

After considering the monotonicity of the susceptible population, we move to the next lemmas to study the equilibria of the epidemic-opinion model. 

\begin{lemma}
\label{lem:Equi}
The equilibria of the epidemic-opinion model in \eqref{eq:S-O-C}-\eqref{eq:O-I-C} take the form $(s_{e},\boldsymbol{0},(\bar{L}+I_{n})^{-1}(\boldsymbol{1}_{n}-s_{e}))$, where $[s_e]_i\in[0,1]$ and $[(\bar{L}+I_{n})^{-1}(\boldsymbol{1}_{n}-s_{e}))]_i\in [0,1]$, $\forall i\in [n]$.
\end{lemma}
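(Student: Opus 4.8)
The plan is to set all three right-hand sides in \eqref{eq:S-O-C}--\eqref{eq:O-I-C} to zero and solve the resulting algebraic system, treating the equations in the order $o$, then $x$, then $s$.

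First I would handle the opinion equation. Setting $\dot o = \boldsymbol{0}$ in \eqref{eq:O-I-C} gives $(\bar L + I_n)o = \boldsymbol{1}_n - s$. The sub-step that needs an argument is that $\bar L + I_n$ is nonsingular: $\bar L$ is a graph Laplacian, so it has nonnegative diagonal, nonpositive off-diagonal entries, and zero row sums; hence $\bar L + I_n$ has positive diagonal and is strictly row-diagonally dominant, so it is a nonsingular $M$-matrix with $(\bar L + I_n)^{-1} \ge 0$. Consequently, for whatever value $s_e$ the susceptible state takes at equilibrium, the opinion coordinate is forced to be $o_e = (\bar L + I_n)^{-1}(\boldsymbol{1}_n - s_e)$.

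Next I would show $x_e = \boldsymbol{0}$. Adding \eqref{S-O-Final} and \eqref{I-O-Final} and setting both to zero cancels the $\tilde S(t) B(o(t)) x(t)$ terms and leaves $G(o_e) x_e = \boldsymbol{0}$. Now $G(o_e) = G_{\min} + (G - G_{\min})\tilde O_e$ is diagonal with $i$th entry $\gamma_{\min} + (\gamma_i - \gamma_{\min})[o_e]_i$, which is $\ge \gamma_{\min} > 0$ by Assumption~\ref{A1} together with $[o_e]_i \in [0,1]$ (from Lemma~\ref{lem: well-defined}). Hence $G(o_e)$ is invertible and $x_e = \boldsymbol{0}$. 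Substituting $x_e = \boldsymbol{0}$ back into \eqref{S-O-Final} shows that $\dot s = \boldsymbol{0}$ holds for every $s_e$, so $s_e$ is a free parameter and the equilibrium set is a continuum parametrized by $s_e$.

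Finally I would pin down the ranges. Lemma~\ref{lem: well-defined} gives forward invariance of $[0,1]^{3n}$, so any equilibrium in the meaningful state space satisfies $[s_e]_i \in [0,1]$ and $[o_e]_i \in [0,1]$; moreover, since $(\bar L + I_n)\boldsymbol{1}_n = \boldsymbol{1}_n$ and $(\bar L + I_n)^{-1} \ge 0$, one gets $o_e = \boldsymbol{1}_n - (\bar L + I_n)^{-1} s_e \in [0,1]^n$ automatically whenever $s_e \in [0,1]^n$. Overall there is no serious obstacle: the statement is essentially a substitution computation, and the only points requiring mild care are the nonsingularity (and nonnegativity of the inverse) of $\bar L + I_n$ and the positivity of the diagonal of $G(o_e)$.
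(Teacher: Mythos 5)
Your proposal is correct, and while the skeleton is the same as the paper's (set all three right-hand sides to zero and solve), the two steps that carry the real content are handled differently. For $x_e=\boldsymbol{0}$, the paper simply asserts that $\dot s=\dot x=\boldsymbol{0}$ forces $x_e=\boldsymbol{0}$; you actually justify it by adding \eqref{S-O-Final} and \eqref{I-O-Final} to cancel the infection terms and then using invertibility of the positive diagonal matrix $G(o_e)$, which is a welcome tightening (just note that the positivity of its diagonal entries is cleanest to get from your own $o$-step, since the $\dot o=\boldsymbol{0}$ equation does not involve $x$ and already yields $o_e=(\bar L+I_n)^{-1}(\boldsymbol{1}_n-s_e)\in[0,1]^n$ for $s_e\in[0,1]^n$, rather than from Lemma~\ref{lem: well-defined}, which concerns trajectories). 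For the range claim $[o_e]_i\in[0,1]$, the paper runs a row-by-row contradiction argument built on strict diagonal dominance, chaining entries $[o_e]_i>1$ with successively larger excesses until no row is left; you instead observe that $\bar L+I_n$ is a strictly row-diagonally dominant Z-matrix, hence a nonsingular M-matrix with $(\bar L+I_n)^{-1}\ge 0$, and combine this with $(\bar L+I_n)\boldsymbol{1}_n=\boldsymbol{1}_n$ to write $o_e=\boldsymbol{1}_n-(\bar L+I_n)^{-1}s_e$, from which $\boldsymbol{0}\le o_e\le\boldsymbol{1}_n$ follows in one line by entrywise monotonicity. Your route is shorter, avoids the paper's somewhat delicate chain-of-excesses bookkeeping, and is also more precise in its linear-algebra language: for a directed opinion graph $\bar L$ need not be symmetric, so the paper's ``positive definite'' phrasing is loose, whereas nonsingular-M-matrix/diagonal-dominance is exactly the property being used. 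The trade-off is that you import the standard fact that such matrices have nonnegative inverses, while the paper's argument is self-contained at the level of elementary inequalities.
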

\begin{proof}
Based on Lemma \ref{lem:decreasing}, the RHS of \eqref{eq:S-O-C} is always less than or equal to zero. Therefore, the susceptible states are monotonically decreasing. 
To compute the equilibria of the system in \eqref{eq:S-O-C}-\eqref{eq:O-I-C}, let $\dot{s}(t)=\dot{x}(t)=\dot{o}(t)=\boldsymbol{0}$, which leads to $x_e=\boldsymbol{0}$, and $s_e$ can be any point located in the set $[0,1]^n$. Substituting the equilibrium point $s_e$ into \eqref{eq:O-I-C} with $\dot{o}(t)=\boldsymbol{0}$, 
\begin{equation}
\label{eq:equi_opi}
\boldsymbol{0}=(\boldsymbol{1}_{n}-s_e)-\left(\bar{L}+I_{n}\right)o_e.
\end{equation}
Since the row sums of the Laplacian matrix $L$ are zeros, the matrix $(\bar{L}+I_n)$ is a strictly diagonally dominant matrix, i.e., the matrix $(\bar{L}+I_n)$ is positive definite. Therefore,
\begin{equation}
\label{eq:unique}
o_{e}=\left(\bar{L}+I_{n}\right)^{-1}\left(\boldsymbol{1}_{n}-s_{e}\right).
\end{equation}
Now we have to show that for each $s_e\in [0,1]^n$, the solution $o_e\in[0,1]^n$. Rearrange 
\eqref{eq:equi_opi} as follows:
\[
\left(\boldsymbol{1}_{n}-s_e\right)=\left(\bar{L}+I_{n}\right)o_e.
\]
If $s_e\in [0,1]^n$, we have $(\boldsymbol{1}_n-s_e)\in [0,1]^n$, which leads to $\left(\bar{L}+I_{n}\right)o_e\in [0,1]^n$. 

We prove $[o_e]_i\in[0,1]$ $\forall i\in [n]$ by contradiction. Without loss of generality, suppose that $[o_e]_i>1$, with $[o_e]_i-a_i=1$, belongs to a solution to the $i$th row of the equation 
$\left(\boldsymbol{1}_{n}-s_e\right)=\left(\bar{L}+I_{n}\right)o_e$.
As shown before, the matrix $\left(\bar{L}+I_{n}\right)$ is strictly diagonally dominant and $\bar{L}_{ii}+\sum_{j\in\mathcal{N}_{i}}\bar{L}_{ij}+1=1$, $\forall i\in[n]$. If the rest of the entries of the solution $[o_e]_j\leq [o_e]_i$, $j\neq i$, $j\in [n]$, we must have $[\left(I_{n}\right)o_e]_i>1$, which violates $(\boldsymbol{1}_n-s_e)\in[0,1]^n$. Hence, to ensure 
the $i$th entry
$[\left(\bar{L}+I_{n}\right)o_e]_i\in[0,1]$, 
there must be an entry $[o_e]_j>[o_e]_i$ in the solution $o_e$ such that $[o_e]_j-a_j=1$, where $a_j>a_i$, $j\in [n], j\neq i$. 

Then, consider the $j$th row of the equation $\left(\boldsymbol{1}_{n}-s_e\right)=\left(\bar{L}+I_{n}\right)o_e$. To ensure the solution of the $j$th row belongs to $[0,1]$, based on the same analysis technique, there must exist 
one entry $[o_e]_k$ such that $[o_e]_k>1$, with $[o_e]_k-a_k=1$, $a_k>a_j>a_i$. Following the same process to check the rest of the rows of the equation, we can conclude that, for the last equation left to check, there is no such entry left in $o_e$ satisfying the inequality condition, $a_l>\dots>a_k>a_j>a_i$, where $[o_e]_l$ corresponds to the last row left to be checked in the equation $(\boldsymbol{1}_{n}-s_e)=(\bar{L}+I_{n})o_e$, with
$[o_e]_l-a_l=1$. Therefore, each entry of $o_e$ as a solution to $\left(\boldsymbol{1}_{n}-s_e\right)=\left(\bar{L}+I_{n}\right)o_e$ is not greater than 1.
Using the same analysis technique, it can be shown that
each entry of $o_e$ as a solution to $\left(\boldsymbol{1}_{n}-s_e\right)=\left(\bar{L}+I_{n}\right)o_e$ is not smaller than 0. Therefore, the solution $o_e$ to the equation $\left(\boldsymbol{1}_{n}-s_e\right)=\left(\bar{L}+I_{n}\right)o_e$ must be located in $[0,1]^n$, which proves the statement.
\end{proof}

Lemma \ref{lem:Equi} shows that there are
infinite equilibria for the epidemic-opinion model captured by \eqref{eq:S-O-C}-\eqref{eq:O-I-C}. 
In particular, \eqref{eq:unique} indicates that the opinion states of the communities at the equilibrium can be uniquely evaluated as a function of the
steady-state
susceptible population in the communities. 
The following lemma further characterizes the condition that the communities reach a consensus on their opinions, i.e., the opinion states are the same when the epidemic disappears. 


\begin{lemma}
\label{consensus}
The communities will reach consensus on their opinions if and only if all the communities have the same proportion of infections, captured by the equilibria $\left(s_{e},\boldsymbol{0},\boldsymbol{1}_n-s_{e}\right)$, where $[s_e]_i = [s_e]_j\ \forall i\neq j$.
\end{lemma}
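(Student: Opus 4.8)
The plan is to work from the characterization of equilibria given in Lemma~\ref{lem:Equi}, namely that every equilibrium has the form $(s_e,\boldsymbol{0},o_e)$ with $o_e=(\bar{L}+I_n)^{-1}(\boldsymbol{1}_n-s_e)$, and to determine exactly when $o_e$ is a consensus vector, i.e.\ $o_e=c\boldsymbol{1}_n$ for some scalar $c$. Equivalently, by the invertibility of $(\bar{L}+I_n)$, consensus holds iff $\boldsymbol{1}_n-s_e=(\bar{L}+I_n)(c\boldsymbol{1}_n)=c\bar{L}\boldsymbol{1}_n+c\boldsymbol{1}_n$. Since $\bar{L}$ is a graph Laplacian, its row sums are zero, so $\bar{L}\boldsymbol{1}_n=\boldsymbol{0}$, and the right-hand side collapses to $c\boldsymbol{1}_n$. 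Hence $o_e=c\boldsymbol{1}_n$ forces $\boldsymbol{1}_n-s_e=c\boldsymbol{1}_n$, that is, $s_e$ must itself be a constant vector $(1-c)\boldsymbol{1}_n$; and conversely if $s_e=(1-c)\boldsymbol{1}_n$ then plugging into \eqref{eq:unique} gives $o_e=(\bar{L}+I_n)^{-1}(c\boldsymbol{1}_n)=c\boldsymbol{1}_n$, again using $\bar{L}\boldsymbol{1}_n=\boldsymbol{0}$, so $o_e$ is a consensus vector with $o_e=\boldsymbol{1}_n-s_e$.

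First I would state the equivalence ``$o_e$ is a consensus vector $\iff$ $s_e$ is a constant vector'' and prove it via the two-line computation above, the key facts being that $(\bar{L}+I_n)$ is invertible (established in the proof of Lemma~\ref{lem:Equi}) and that $\bar{L}\boldsymbol{1}_n=\boldsymbol{0}$. Next I would translate ``$s_e$ constant'' into the language of the lemma's statement: since $1-[s_e]_i=[x_e]_i+[r_e]_i$ is the final infection level of community $i$ (and $x_e=\boldsymbol{0}$, so it equals $[r_e]_i$, the recovered proportion), $[s_e]_i=[s_e]_j$ for all $i,j$ is precisely the statement that all communities have the same proportion of infections. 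Finally I would note that in the consensus case the common opinion value is $c=1-[s_e]_i$, so the equilibrium can be written $(s_e,\boldsymbol{0},\boldsymbol{1}_n-s_e)$ exactly as claimed, recovering the simplified form in the lemma statement.

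I do not expect a serious obstacle here: the argument is essentially the observation that $\boldsymbol{1}_n$ is the right eigenvector of $\bar{L}$ for eigenvalue $0$, hence of $(\bar{L}+I_n)^{-1}$ for eigenvalue $1$, so $(\bar{L}+I_n)^{-1}$ maps constant vectors to themselves and (being invertible) maps non-constant vectors to non-constant vectors. The only point requiring a little care is the ``only if'' direction---one must use invertibility to conclude that $o_e$ constant \emph{forces} $\boldsymbol{1}_n-s_e$ constant, rather than merely being consistent with it---but this is immediate from $(\bar{L}+I_n)o_e=\boldsymbol{1}_n-s_e$ evaluated at $o_e=c\boldsymbol{1}_n$. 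Strong connectivity of $\bar{\mathcal{G}}$ (Assumption~\ref{A1}) is not actually needed for this statement, since we are characterizing which equilibria are consensus equilibria rather than asserting convergence to one; I would mention this only if the surrounding text expects it.
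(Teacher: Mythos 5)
Your proof is correct and follows essentially the same route as the paper's: both work from the equilibrium relation $(\bar{L}+I_{n})o_e=\boldsymbol{1}_{n}-s_e$, use the zero row sums of $\bar{L}$ (so that $(\bar{L}+I_{n})$ and its inverse fix constant vectors), and use invertibility of $(\bar{L}+I_{n})$ to get both directions. Your phrasing is slightly cleaner in that it avoids the paper's appeal to positive (semi-)definiteness of the Laplacian and rests only on strict diagonal dominance and $\bar{L}\boldsymbol{1}_{n}=\boldsymbol{0}$, but the underlying argument is the same.
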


\begin{proof}
First, we show the necessary condition. If all communities reach consensus at the equilibrium point, giving that $o_e$ is a consensus state, where $[o_e]_i=[o_e]_j$, $\forall i,j \in [n]$, $i\neq j$, then $Lo_e=\boldsymbol{0}$. From \eqref{eq:equi_opi}, $s_e=\boldsymbol{1}_n-o_e$, which implies that $[s_e]_i=1-[o_e]_i$, and $[s_e]_i=[s_e]_j$, $\forall i,j \in [n]$, $i\neq j$. 

For the sufficient condition, if all communities have the same proportion of the infected population at the equilibria, we have  $1-[s_e]_i=1-[s_e]_j$, $\forall i,j \in [n]$, $i\neq j$. Based on \eqref{eq:equi_opi}, we have $\boldsymbol{1}_n-s_e=(\bar{L}+I_n)o_e$. 
Since $\left(\bar{L}+I_{n}\right)$ is a positive definite matrix, $o_e=\left(\bar{L}+I_{n}\right)^{-1}(\boldsymbol{1}_n-s_e)$ is the unique solution to the equation. Further, $L$ is positive semi-definite with only one zero eigenvalue, paired with the eigenvector $v$, where all the entries of $v$ are the same. Therefore, $\left(\bar{L}+I_{n}\right)v=v$, 
giving that
$v=o_e=(\boldsymbol{1}_n-s_e)$ is the unique solution to the equation $\left(\bar{L}+I_{n}\right)o_e=\boldsymbol{1}_n-s_e$, completing
the proof.
\end{proof}



Lemma \ref{lem:Equi} and Lemma \ref{consensus} summarize the equilibria of the epidemic-opinion model from \eqref{eq:S-O-C} to \eqref{eq:O-I-C}. In particular, the lemmas 
show that the communities' beliefs in the seriousness of the epidemic can reflect the 
infection level. 
More importantly, the communities will reach consensus on the seriousness of the epidemic if and only if the epidemic caused the same 
proportion of infected population in all communities. Under this situation, the belief on the seriousness of the epidemic is proportional to the proportion of the recovered population in all communities, characterized by $o_e=\boldsymbol{1}_n-s_{e}$.
\begin{remark}
The communities can rarely reach a consensus of their opinions on the epidemic's severity since it will be implied by Lemma \ref{consensus} that every community has the same infection level, which is unusual. However, one exception is when every community is fully infected, then all communities will agree that the epidemic is utterly severe ($o_e = \boldsymbol{1}_n$).
\end{remark}
\vspace{-3ex}
\vspace{-1ex}
\subsection{Effective Reproduction Number}

\vspace{-1ex}

The effective reproduction number of the model characterizes the dynamical behavior of the system. We introduce the following lemmas before formally defining the notion.

\begin{lemma}
\label{lem:irr_non}\cite[Thm. 2.7, and Lemma 2.4]{varga2009matrix_book} Suppose
that M is an irreducible nonnegative matrix. Then: 
\begin{enumerate}
\item M has a simple positive real eigenvalue equal to its spectral radius, $\rho(M)$;
\item There is a unique (up to scalar multiple) left eigenvector $v\gg\boldsymbol{0}$ (right eigenvector $w\gg \boldsymbol{0}$) pairing with $\rho(M)$;
\item $\rho(M)$ increases when any entry of M increases;
\item If N is also an irreducible nonnegative matrix with the same size and $M\geq N$, then $\rho(M)\geq\rho(N)$.
\end{enumerate}
\end{lemma}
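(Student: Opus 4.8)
The plan is to establish this as the Perron--Frobenius theorem for irreducible nonnegative matrices from first principles, via the Collatz--Wielandt variational formula, rather than merely citing \cite{varga2009matrix_book}. The single fact that makes everything work is that irreducibility of the nonnegative matrix $M\in\R^{n\times n}$ forces $(I+M)^{n-1}$ to be \emph{entrywise strictly positive}; this is what upgrades qualitative conclusions (``nonnegative and nonzero'') into quantitative ones (``strictly positive''). For $x\geq\boldsymbol{0}$ with $x\neq\boldsymbol{0}$, set $r(x)=\min_{i:\,x_i>0}[Mx]_i/x_i$ and $\rho^\ast=\sup\{\,r(x):x\geq\boldsymbol{0},\ \boldsymbol{1}_n^\top x=1\,\}$; the goal is to show $\rho^\ast=\rho(M)$ and that it behaves as claimed.

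\emph{Attaining the supremum and extracting an eigenvector.} The obstruction to a naive compactness argument is that $r$ is only upper semicontinuous on the simplex. I would circumvent this by replacing $x$ with $y=(I+M)^{n-1}x\gg\boldsymbol{0}$: from $Mx\geq r(x)x$ and nonnegativity of $(I+M)^{n-1}$ one gets $My\geq r(x)y$, hence $r(y)\geq r(x)$, so the supremum over the simplex equals the supremum over the compact set of normalized such $y$, on which $r$ is continuous; thus $\rho^\ast$ is attained at some $z\gg\boldsymbol{0}$. Next, if $Mz\neq\rho^\ast z$ then $Mz-\rho^\ast z$ is nonnegative and nonzero, so $(I+M)^{n-1}(Mz-\rho^\ast z)\gg\boldsymbol{0}$, i.e.\ $Mw\gg\rho^\ast w$ with $w=(I+M)^{n-1}z\gg\boldsymbol{0}$, which makes $r(w)>\rho^\ast$ --- a contradiction. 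Hence $Mz=\rho^\ast z$ with $z\gg\boldsymbol{0}$, and since $M$ is irreducible and (for $n\geq2$) nonzero, $Mz\neq\boldsymbol{0}$, so $\rho^\ast>0$. Finally, for any eigenvalue $\lambda$ of $M$ with eigenvector $u$, taking entrywise absolute values in $\lambda u=Mu$ gives $|\lambda|\,|u|\leq M|u|$, whence $|\lambda|\leq r(|u|)\leq\rho^\ast$; therefore $\rho^\ast=\rho(M)$ is a positive real eigenvalue with a strictly positive eigenvector, which gives most of part (1) and the existence half of part (2).

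\emph{Simplicity and uniqueness.} For the geometric part, if $y$ is any real eigenvector for $\rho(M)$, choose $\varepsilon$ so that $z-\varepsilon y\geq\boldsymbol{0}$ has some zero coordinate; it is still an eigenvector for $\rho(M)>0$, but a nonnegative eigenvector of an irreducible $M$ for a positive eigenvalue must be strictly positive (apply $(I+M)^{n-1}$ again), so $z-\varepsilon y=\boldsymbol{0}$; the eigenspace is one-dimensional, giving uniqueness of the right Perron vector up to scaling, and the same argument applied to $M^\top$ (also irreducible nonnegative) handles the left eigenvector, finishing part (2). For algebraic simplicity I would differentiate the characteristic polynomial, using $\frac{d}{d\lambda}\det(\lambda I-M)=\sum_i\det(\lambda I_{n-1}-M_{(i)})$ where $M_{(i)}$ deletes row and column $i$; each $M_{(i)}$ is a proper principal submatrix of the irreducible matrix $M$, so (extending a nonnegative eigenvector of $M_{(i)}$ by a zero coordinate and applying $(I+M)^{n-1}$) $\rho(M_{(i)})<\rho(M)$, hence $\det(\rho(M)I_{n-1}-M_{(i)})>0$ (its real eigenvalues lie strictly below $\rho(M)$ and complex ones occur in conjugate pairs); summing, the derivative at $\rho(M)$ is strictly positive, so $\rho(M)$ is a simple root, completing part (1).

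\emph{Monotonicity.} For (3)--(4), let $M\geq N$ with both irreducible nonnegative of the same size, let $w\gg\boldsymbol{0}$ be a right Perron vector of $N$ and $v\gg\boldsymbol{0}$ a left Perron vector of $M$. Then $\rho(M)\,v^\top w=v^\top M w\geq v^\top N w=\rho(N)\,v^\top w$ and $v^\top w>0$, so $\rho(M)\geq\rho(N)$, which is (4); if some entry of $M$ strictly exceeds the corresponding entry of $N$, the middle inequality is strict because $v,w\gg\boldsymbol{0}$, giving (3). I expect the attainment step --- replacing the simplex by its $(I+M)^{n-1}$-image to tame the discontinuity of $r$ --- to be the main obstacle, with algebraic (as opposed to merely geometric) simplicity of $\rho(M)$ a close second; once strict positivity of $z$, $v$, and $w$ is in hand, the remaining steps are routine.
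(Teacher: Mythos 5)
The paper offers no proof of this lemma at all --- it is imported verbatim from Varga's book as a standard Perron--Frobenius result --- so there is no internal argument to compare against; what you have produced is a self-contained proof of the cited theorem, and it is essentially correct. The Collatz--Wielandt route is the classical one: the attainment argument via the strictly positive image under $(I+M)^{n-1}$, the upgrade of a maximizer to an eigenvector, the bound $|\lambda|\leq r(|u|)\leq\rho^{\ast}$, the geometric-simplicity argument, the derivative-of-the-characteristic-polynomial argument for algebraic simplicity, and the $v^{\top}Mw$ sandwich for parts (3)--(4) are all sound. Two small patches would be needed in a full write-up: (i) in establishing $\rho(M_{(i)})<\rho(M)$ you invoke ``a nonnegative eigenvector of $M_{(i)}$'' for its spectral radius, but $M_{(i)}$ need not be irreducible, so this existence is not covered by what you have proved; it suffices, however, to take the Perron vector of an irreducible diagonal block of $M_{(i)}$ attaining $\rho(M_{(i)})$ and extend it by zeros, which gives a nonnegative $u\neq\boldsymbol{0}$ with $M_{(i)}u\geq\rho(M_{(i)})u$, and such a test vector is all your argument actually uses; (ii) the geometric-simplicity step works with real eigenvectors only, which is enough because $\rho(M)$ is real so the real and imaginary parts of any complex eigenvector are themselves eigenvectors, but this should be said; and, as you note, the $1\times1$ zero matrix must be excluded for ``positive'' to hold. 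What the paper's citation buys is brevity; what your proof buys is self-containment, making explicit the single mechanism $(I+M)^{n-1}\gg\boldsymbol{0}$ behind the strict-positivity facts the paper relies on elsewhere (e.g., the positivity of the eigenvectors used to define $p(t)$).
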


\begin{lemma}
\label{lem:irr_spe}\cite[Prop. 1]{bivirus}
Suppose that $\varLambda$ is a negative diagonal matrix in $\R^{n\times n}$
and $N$ is an irreducible nonnegative matrix in $\R^{n\times n}$.
Let $M=\varLambda+N$. Then, $\sigma(M)<0$ if and only if $\rho(-\varLambda^{-1}N)<1$,
$\sigma(M)=0$ if and only if $\rho(-\varLambda^{-1}N)=1$, and $\sigma(M)>0$
if and only if $\rho(-\varLambda^{-1}N)~>~1$. 
\end{lemma}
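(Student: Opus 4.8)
The plan is to reduce all three equivalences to the monotonicity of a single scalar-valued function. Since $\varLambda$ is a negative diagonal matrix, for every real $\lambda>\max_{i}[\varLambda]_{ii}$ the matrix $\lambda I-\varLambda$ is a positive diagonal matrix, so $R(\lambda):=(\lambda I-\varLambda)^{-1}N$ is nonnegative, and it is irreducible because left-multiplication by a positive diagonal matrix preserves the zero pattern (hence the strong connectivity of the associated digraph) of $N$. I would define $g(\lambda):=\rho(R(\lambda))$ on the interval $(\max_{i}[\varLambda]_{ii},\infty)$, which contains $0$ because $\varLambda$ is negative, and note that $g(0)=\rho(-\varLambda^{-1}N)$. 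First I would check that $g$ is strictly decreasing: if $\lambda_{1}<\lambda_{2}$ then $(\lambda_{1}I-\varLambda)^{-1}>(\lambda_{2}I-\varLambda)^{-1}$ entrywise on the diagonal, so $R(\lambda_{1})\geq R(\lambda_{2})$ with a strict increase in at least one entry (every row of the irreducible matrix $N$ is nonzero), and Lemma \ref{lem:irr_non}(3) yields $g(\lambda_{1})>g(\lambda_{2})$.

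The main step is to show that $\sigma(M)$ lies in the domain of $g$ and that $g(\sigma(M))=1$. Here $M=\varLambda+N$ is Metzler (off-diagonal nonnegative) and irreducible, so applying Lemma \ref{lem:irr_non} to the nonnegative irreducible matrix $M+cI$ for $c>0$ large enough shows that $\sigma(M)=\rho(M+cI)-c$ is a (simple, real) eigenvalue of $M$ with a strictly positive eigenvector $w\gg\0$, i.e.\ $Mw=\sigma(M)w$. Reading off the $i$th row of this identity gives $(\sigma(M)-[\varLambda]_{ii})w_{i}=[Nw]_{i}\geq 0$, and since $w_{i}>0$ this forces $\sigma(M)\geq[\varLambda]_{ii}$ for every $i$; moreover equality at the index $i^{\ast}$ achieving $\max_{i}[\varLambda]_{ii}$ is impossible, since it would give $[Nw]_{i^{\ast}}=0$, contradicting $w\gg\0$ together with the fact that row $i^{\ast}$ of the irreducible matrix $N$ is nonzero (the scalar case $n=1$ being checked directly). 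Hence $\sigma(M)>\max_{i}[\varLambda]_{ii}$, so $\sigma(M)I-\varLambda$ is an invertible positive diagonal matrix, and $Mw=\sigma(M)w$ rearranges to $(\sigma(M)I-\varLambda)^{-1}Nw=w$. Thus $1$ is an eigenvalue of the nonnegative irreducible matrix $R(\sigma(M))$ with a strictly positive eigenvector, so by Lemma \ref{lem:irr_non}(1)--(2) it coincides with the spectral radius, giving $g(\sigma(M))=1$.

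To conclude, both $\sigma(M)$ and $0$ lie in the domain of the strictly decreasing function $g$, so $\sigma(M)<0\iff g(\sigma(M))>g(0)\iff 1>\rho(-\varLambda^{-1}N)$; replacing $<$ by $=$ and by $>$ throughout yields the remaining two equivalences. I expect the main obstacle to be the middle paragraph — specifically, invoking Perron--Frobenius for the Metzler irreducible matrix $M$ to obtain a genuine eigenvalue–eigenvector pair at the spectral abscissa, and then the (essentially diagonal-dominance flavored) argument that $\sigma(M)>\max_{i}[\varLambda]_{ii}$ so that the inverse defining $g(\sigma(M))$ makes sense; once these are secured, the strictly-monotone-scalar-function argument closes the proof with no further analysis.
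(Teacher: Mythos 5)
Your argument is correct, but note that the paper itself offers no proof of this lemma at all: it is imported verbatim, by citation, from \cite[Prop.~1]{bivirus}, so what you have written is a self-contained derivation of a quoted result rather than an alternative to an in-paper argument. The proof in the cited reference runs through M-matrix/regular-splitting theory (essentially: $\sigma(\varLambda+N)<0$ iff $-\varLambda-N$ is a nonsingular M-matrix iff the regular splitting $-\varLambda-N$ with $-\varLambda$ positive diagonal satisfies $\rho((-\varLambda)^{-1}N)<1$), whereas your route---define $g(\lambda)=\rho\bigl((\lambda I-\varLambda)^{-1}N\bigr)$ on $(\max_i[\varLambda]_{ii},\infty)$, prove strict monotonicity by entrywise Perron--Frobenius comparison, show $\sigma(M)$ lies in the domain and satisfies $g(\sigma(M))=1$ via the positive eigenvector of the irreducible Metzler matrix $M$ (Lemma~\ref{lem:irr_M} gives this pair directly, so the shift by $cI$ is optional), and then read all three equivalences off the injectivity of $g$---stays entirely within the Perron--Frobenius facts the paper already quotes (Lemmas~\ref{lem:irr_non} and~\ref{lem:irr_M}), which is a natural fit here. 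Two small points to tighten: (i) the step ``an eigenvalue of the irreducible nonnegative matrix $R(\sigma(M))$ with eigenvector $w\gg\boldsymbol{0}$ must equal its spectral radius'' is not literally contained in items (1)--(2) of Lemma~\ref{lem:irr_non}; it needs the one-line pairing with the positive left Perron vector $u$ of $R(\sigma(M))$, namely $\rho\, u^{\top}w=u^{\top}R(\sigma(M))w=u^{\top}w$ with $u^{\top}w>0$; (ii) your use of item (3) of Lemma~\ref{lem:irr_non} tacitly reads ``increases'' as ``strictly increases'' (the correct statement for irreducible nonnegative matrices), and this strictness is what makes $g$ injective, on which the trichotomy hinges, so it deserves an explicit word.
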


\begin{lemma}
\cite[Sec. 2.1 and Lemma 2.3]{varga2009matrix_book}
\label{lem:irr_M} Suppose that $M$ is an irreducible Metzler matrix. Then, $\sigma\left(M\right)$ is a simple eigenvalue of $M$ and
there exists a unique (up to scalar multiple) left eigenvector $x\gg\boldsymbol{0}$ (right eigenvector $y\gg\boldsymbol{0}$) such
that $x^\top M=\sigma\left(M\right)x$ ($My=\sigma\left(M\right)y$). %
\end{lemma}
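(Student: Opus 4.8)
The final statement in the excerpt is Lemma \ref{lem:irr_M}, which is quoted as a known result from Varga's book (specifically, the Perron--Frobenius-type theory for irreducible Metzler matrices). Since the paper cites it directly, a full original proof is unnecessary; however, the natural way to establish it is to reduce the Metzler case to the nonnegative case already handled by Lemma \ref{lem:irr_non}. The plan is to pick a scalar $c>0$ large enough that $M + cI$ has all nonnegative entries (possible because $M$ is Metzler, so only the diagonal entries can be negative, and they are finite). Because $M$ is irreducible, $M + cI$ is an irreducible nonnegative matrix, so Lemma \ref{lem:irr_non} applies to it.

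First I would invoke part (1) of Lemma \ref{lem:irr_non} to conclude that $\rho(M + cI)$ is a simple positive real eigenvalue of $M + cI$. Next, I would use the spectral mapping observation that the eigenvalues of $M$ are exactly $\{\lambda - c : \lambda \in \text{spec}(M+cI)\}$, and that this shift preserves both algebraic and geometric multiplicities and preserves eigenvectors. Since $\rho(M+cI) = \max\{\mathrm{Re}(\lambda): \lambda\in\text{spec}(M+cI)\}$ for a nonnegative matrix (the spectral radius is attained by a real eigenvalue and dominates all real parts), subtracting $c$ shows that $\sigma(M) = \rho(M+cI) - c$ is the spectral abscissa of $M$, it is real, and it is a simple eigenvalue. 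Then I would apply part (2) of Lemma \ref{lem:irr_non} to $M+cI$ to obtain the unique (up to scaling) strictly positive left eigenvector $x \gg \boldsymbol{0}$ and right eigenvector $y \gg \boldsymbol{0}$ associated with $\rho(M+cI)$; because $(M+cI)y = \rho(M+cI)y$ is equivalent to $My = \sigma(M)y$, and similarly for the left eigenvector, these are exactly the eigenvectors of $M$ paired with $\sigma(M)$.

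The only mildly delicate point — and the step I expect to require the most care — is justifying that $\sigma(M+cI) = \rho(M+cI)$, i.e.\ that for an irreducible nonnegative matrix the spectral abscissa coincides with the spectral radius. This follows because for a nonnegative matrix every eigenvalue $\lambda$ satisfies $|\lambda| \le \rho(M+cI)$, hence $\mathrm{Re}(\lambda) \le \rho(M+cI)$, and $\rho(M+cI)$ itself is an eigenvalue by part (1); so the maximum real part is attained and equals $\rho(M+cI)$. Combining these, $\sigma(M) = \rho(M+cI) - c$ inherits simplicity and the strictly positive eigenvectors, which is precisely the claim. Since all of this is standard and already packaged in the cited references, in the paper it suffices to state the lemma and cite \cite{varga2009matrix_book}, with the shift argument above being the content one would write out if a self-contained proof were demanded.
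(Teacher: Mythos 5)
Your proposal is correct. The paper itself gives no proof of this lemma --- it is stated as a known result and attributed to \cite{varga2009matrix_book} --- and your argument is the standard and valid way to recover it: shift by $cI$ with $c$ exceeding the largest diagonal magnitude so that $M+cI$ is an irreducible nonnegative matrix, apply Lemma \ref{lem:irr_non} to get the simple Perron root $\rho(M+cI)$ with strictly positive left and right eigenvectors unique up to scaling, and then use the facts that the shift preserves eigenvectors and multiplicities and that for a nonnegative matrix the spectral abscissa equals the spectral radius (since $|\lambda|\leq\rho$ forces $\mathrm{Re}(\lambda)\leq\rho$, and $\rho$ is itself an eigenvalue), so that $\sigma(M)=\rho(M+cI)-c$ inherits simplicity and the positive eigenvectors.
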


\begin{lemma}
\label{lem:Metzler} 
\cite[Sec. 1, Lemma 2]{cvetkovic2020stabilizing}
Suppose that $A, B\in\R^{n\times n}$ are Metzler matrices. Then, $\sigma\left(A\right)\leq \sigma\left(B\right)$ if $A\leq B$.
\end{lemma}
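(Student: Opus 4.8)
The final statement to prove is Lemma~\ref{lem:Metzler}: if $A,B\in\R^{n\times n}$ are Metzler matrices with $A\le B$, then $\sigma(A)\le\sigma(B)$. This is cited to another paper, but let me sketch a self-contained proof proposal.

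The plan is to reduce to the irreducible case already covered by earlier lemmas and then pass to the limit, exploiting that the spectral abscissa depends continuously on the matrix entries. First I would handle the case where $B$ is irreducible. Choose $\alpha>0$ large enough that $A+\alpha I$ and $B+\alpha I$ are nonnegative; since $\sigma(M+\alpha I)=\sigma(M)+\alpha$, it suffices to compare $\sigma$ of the shifted matrices, and for a nonnegative matrix the spectral abscissa equals the spectral radius $\rho$ (Perron--Frobenius). Now $0\le A+\alpha I\le B+\alpha I$ with $B+\alpha I$ irreducible nonnegative, so by a monotonicity argument for $\rho$ — essentially item~4 of Lemma~\ref{lem:irr_non}, extended to the case where only the larger matrix is irreducible — we get $\rho(A+\alpha I)\le\rho(B+\alpha I)$, hence $\sigma(A)\le\sigma(B)$.

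For the general (possibly reducible) case, I would use a perturbation argument: let $J$ be the all-ones matrix and consider $B_\epsilon=B+\epsilon J$ for $\epsilon>0$, which is an irreducible Metzler matrix with $A\le B\le B_\epsilon$. By the irreducible case, $\sigma(A)\le\sigma(B_\epsilon)$ for every $\epsilon>0$. Since eigenvalues (and hence the spectral abscissa) depend continuously on the matrix, letting $\epsilon\downarrow 0$ gives $\sigma(A)\le\lim_{\epsilon\downarrow0}\sigma(B_\epsilon)=\sigma(B)$. Alternatively, one can symmetrize: apply the irreducible case to $A\le B$ via $A_\epsilon = A+\epsilon J \le B+\epsilon J = B_\epsilon$, both irreducible, obtaining $\sigma(A_\epsilon)\le\sigma(B_\epsilon)$, and take limits on both sides.

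The main obstacle is the monotonicity step for $\rho$ when the smaller matrix $A+\alpha I$ may be reducible while only $B+\alpha I$ is irreducible; item~4 of Lemma~\ref{lem:irr_non} as stated requires \emph{both} matrices to be irreducible. One clean way around this is to first perturb \emph{both} matrices by $\epsilon J$ (making both irreducible nonnegative after the shift), invoke item~4 directly, and only then take $\epsilon\downarrow0$ — so the reducible case and the irreducible-monotonicity gap are dispatched by the same limiting argument. The remaining ingredients — that $\sigma$ is translation-equivariant under scalar shifts, that $\sigma(M)=\rho(M)$ for nonnegative $M$, and that $\sigma$ is continuous in the entries — are standard and require no heavy computation.
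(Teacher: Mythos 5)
The paper gives no proof of this lemma at all---it is imported directly from \cite[Lemma 2]{cvetkovic2020stabilizing}---so there is no internal argument to compare against; the only question is whether your self-contained derivation is sound, and it is. Your final version of the argument works: replace $A\le B$ by $A+\epsilon J\le B+\epsilon J$ so that both matrices are irreducible Metzler, shift both by $\alpha I$ with $\alpha$ large enough that they become (irreducible) nonnegative, apply item 4 of Lemma~\ref{lem:irr_non} to get $\rho(A+\epsilon J+\alpha I)\le\rho(B+\epsilon J+\alpha I)$, translate back using $\sigma(M+\alpha I)=\sigma(M)+\alpha$ and the fact that $\sigma=\rho$ for nonnegative matrices (Perron--Frobenius), and finally let $\epsilon\downarrow 0$ using continuity of eigenvalues in the entries. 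You also correctly identify the one genuine pitfall in your first sketch---item 4 of Lemma~\ref{lem:irr_non} as stated needs \emph{both} matrices irreducible, so comparing a possibly reducible $A+\alpha I$ against an irreducible $B+\alpha I$ is not directly licensed---and you repair it by perturbing both sides before taking the limit, which is the right fix. The only remark worth adding is that monotonicity of the spectral radius on nonnegative matrices actually holds without any irreducibility assumption, so the $\epsilon J$ perturbation is avoidable if one cites that stronger standard fact; but as a derivation built solely from the lemmas the paper states, your limiting argument is the clean way to do it.
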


\begin{definition}{[Effective 
Reproduction Number $R_o(t)$] 
Let $R_o(t)=\rho(G^{-1}(o(t))\tilde{S}(t)B(o(t)))$, $\forall t\geq0$, denote the 
Effective 
Reproduction Number, where $G(o(t))$, $\tilde{S}(t)$, and $B(o(t))$ are defined in \eqref{eq:I-O-C}.
\label{def:Rt} }
\end{definition}
Note that the 
effective 
reproduction number $R_o(t)$ depends not only on the proportion of the susceptible population $s(t)$, but also
on the variation of the opinion states \bs{$o\left(t\right)$}. 

\begin{proposition}
\label{prop:Spe_r}The 
Effective 
Reproduction Number $R_o(t)$
\bs{has} the following 
\bs{properties}:
\begin{enumerate}
\item If
\[
G^{-1}(o(t_1))\tilde{S}(t_1)B(o(t_1))\geq G^{-1}(o(t_0))\tilde{S}(t_0)B(o(t_0)),
\]
then
$R_o(t_1)\geq R_o(t_0)$;

\item $R_o(t)$ is strictly monotonically decreasing with respect to $s(t)$, $\forall t\geq 0$; 
\item If $o\left(t_{0}\right)\leq o\left(t_{1}\right)$, $\forall t_0<t_1$, then 
$R_o(t_1)\leq R_o(t_0)$.
\end{enumerate}
\end{proposition}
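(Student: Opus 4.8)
The plan is to reduce all three claims to the Perron--Frobenius monotonicity facts in Lemma~\ref{lem:irr_non}, applied to $M(t):=G^{-1}(o(t))\tilde{S}(t)B(o(t))$, whose spectral radius is $R_o(t)$. The preliminary step is to verify that, under Assumption~\ref{A1} (together with $s(0)\gg\boldsymbol{0}$, which by the linearity of $\dot{s}_i$ in $s_i$ in \eqref{eq:S-O-C} yields $s(t)\gg\boldsymbol{0}$ for all $t$), every $M(t)$ is an irreducible nonnegative matrix: $G^{-1}(o(t))$ is a positive diagonal matrix since $\gamma_{\min}>0$, $\tilde{S}(t)$ is a positive diagonal matrix, and $[B(o(t))]_{ij}=\beta_{ij}(1-o_i(t))+\beta_{\min}o_i(t)$ is strictly positive exactly when $(v_j,v_i)\in\mathcal{E}$, so $B(o(t))$---hence $M(t)$---inherits the irreducible pattern of the strongly connected graph $\mathcal{G}$.

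With this in hand, part (1) is immediate from Lemma~\ref{lem:irr_non}(4): the hypothesis is $M(t_1)\ge M(t_0)$ entrywise with both matrices irreducible nonnegative, so $R_o(t_1)=\rho(M(t_1))\ge\rho(M(t_0))=R_o(t_0)$. For part (2), I would note that raising a single component $s_i$ scales the whole $i$th row of $M(t)$ up by a factor larger than one, and that row has a strictly positive entry since every node of $\mathcal{G}$ has an in-neighbor; Lemma~\ref{lem:irr_non}(3) then shows that $s\mapsto\rho\!\left(G^{-1}(o)\diag(s)B(o)\right)$ is strictly increasing in each component of $s$. Combined with Lemma~\ref{lem:decreasing} (the $s_i(t)$ are monotonically decreasing), this is exactly the asserted strictly monotone dependence of $R_o(t)$ on $s(t)$.

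For part (3) the crux is the entrywise monotonicity of $M(t)$ in $o$. For an edge $(v_j,v_i)$,
\[
[M(t)]_{ij}=\frac{s_i(t)\bigl(\beta_{ij}(1-o_i(t))+\beta_{\min}o_i(t)\bigr)}{\gamma_{\min}+(\gamma_i-\gamma_{\min})o_i(t)},
\]
and $\beta_{ij}\ge\beta_{\min}$ makes the numerator nonincreasing in $o_i(t)$, while $\gamma_i\ge\gamma_{\min}>0$ makes the positive denominator nondecreasing in $o_i(t)$; hence each entry of $M(t)$ is nonincreasing in $o$. Therefore $o(t_0)\le o(t_1)$ together with $s(t_1)\le s(t_0)$ (Lemma~\ref{lem:decreasing}) gives, entrywise,
\[
M(t_1)\le G^{-1}(o(t_1))\tilde{S}(t_0)B(o(t_1))\le G^{-1}(o(t_0))\tilde{S}(t_0)B(o(t_0))=M(t_0),
\]
and applying Lemma~\ref{lem:irr_non}(4) once to $M(t_1)\le M(t_0)$ gives $R_o(t_1)=\rho(M(t_1))\le\rho(M(t_0))=R_o(t_0)$.

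I expect the main obstacle to be bookkeeping rather than depth: one must ensure that every matrix to which Lemma~\ref{lem:irr_non} is applied is genuinely irreducible and nonnegative---which is what forces the assumption $s(t)\gg\boldsymbol{0}$ (equivalently $s(0)\gg\boldsymbol{0}$) to be stated explicitly---and one must present each entry of $M(t)$ as $s_i(t)$ times the ratio $[B(o(t))]_{ij}/[G(o(t))]_{ii}$ so that the ``numerator down, denominator up in $o_i$'' behavior is transparent. Everything else is a direct appeal to Lemma~\ref{lem:irr_non} and Lemma~\ref{lem:decreasing}.
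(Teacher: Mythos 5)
Your proposal is correct and follows essentially the same route as the paper: establish that $G^{-1}(o(t))\tilde{S}(t)B(o(t))$ is irreducible nonnegative, then invoke the Perron--Frobenius monotonicity facts of Lemma~\ref{lem:irr_non} together with Lemma~\ref{lem:decreasing}, treating part (2) by fixing $o$ and varying $s$, and part (3) via the entrywise monotonicity of $B(o)$ and $G^{-1}(o)$ in $o$. If anything, you are slightly more careful than the paper in two places: you note explicitly that irreducibility (positive definiteness of $\tilde{S}(t)$) requires $s(t)\gg\boldsymbol{0}$, hence $s(0)\gg\boldsymbol{0}$, which Assumption~\ref{A1} does not literally guarantee, and your use of Lemma~\ref{lem:irr_non}(3) actually delivers the strict monotonicity claimed in part (2), whereas the paper's argument via part (1) only yields the non-strict inequality.
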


\begin{proof}
1) Based on Assumption \ref{A1}, and the definitions of 
$G(o(t))$ and $\tilde{S}(t)$, 
we conclude that $G(o(t))$ and $\tilde{S}(t)$
are 
positive definite diagonal matrices,
and $B(o(t))$
is an irreducible nonnegative matrix, $\forall t\geq0$. Hence, $G^{-1}(o(t))\tilde{S}(t)B(o(t))$
is an irreducible nonnegative matrix. For statement 1), 
if 
\[
G^{-1}(o(t_1))\tilde{S}(t_1)B(o(t_1))\geq G^{-1}(o(t_0))\tilde{S}(t_0)B(o(t_0)),
\]
based on Lemma \ref{lem:irr_non},
\[
\rho(G^{-1}(o(t_1))\tilde{S}(t_1)B(o(t_1)))\geq \rho(G^{-1}(o(t_0))\tilde{S}(t_0)B(o(t_0))),
\]
which leads to $R_o(t_1)\geq R_o(t_0)$.

2) $R_o(t)$ is strictly monotonically decreasing with respect to $s(t)$ means that, when $o(t)$ is fixed, 
a decrease in
$s(t)$ leads to 
a decrease in
$R_o(t)$, $\forall t\geq0$. Without loss of generality, assume that $o(t_0)=o(t_1)$, and $t_0< t_1$. 
From Lemma \ref{lem:decreasing}, the proportion of infected population for each community is monotonically decreasing. Thus, $s(t_1)\leq s(t_0)$, and $[\tilde{S}(t_1)]_{ii}\leq [\tilde{S}(t_0)]_{ii}$, $\forall i\in[n]$, 
which leads to 

\vspace{-2ex}

\small
\[
\left[G\left(o\left(t_{0}\right)\right)^{-1}\tilde{S}(t_0)B\left(o\left(t_{0}\right)\right)\right]_{i,:}\geq\left[G\left(o\left(t_{1}\right)\right)^{-1}\tilde{S}(t_1)B\left(o\left(t_{1}\right)\right)\right]_{i,:}
\]
\vspace{-1.5ex}

\normalsize

\noindent
where other entries of both matrices remain the same.
Based on statement 1) of this proposition, $R_o(t_1)\leq R_o(t_0)$, implying that $R_o(t)$ is monotonically decreasing with respect to $s(t)$.


3) When $o\left(t_{0}\right)\leq o\left(t_{1}\right)$, $\forall t_0<t_1$, we have 
\[
[G^{-1}\left(o\left(t_{0}\right)\right)]_{ii}\geq [G^{-1}\left(o\left(t_{1}\right)\right)]_{ii},
\]
\[
[B\left(o\left(t_{0}\right)\right)]_{i,:}\geq [B\left(o\left(t_{1}\right)\right)]_{i,:},
\]
where other entries of the matrices $G^{-1}\left(o\left(t_{1}\right)\right)$
and $B\left(o\left(t_{1}\right)\right)$ 
are equal to $G^{-1}\left(o\left(t_{0}\right)\right)$
and $B\left(o\left(t_{1}\right)\right)$, respectively. Additionally, from Lemma \ref{lem:decreasing}, $\forall t_0<t_1$, we have $\tilde{S}_{ii}(t_1)\leq \tilde{S}_{ii}(t_0)$. Following
the same analysis technique from the proof of statement 2), we 
have
that $R_o(t_1)\leq R_{o}(t_0)$.
\end{proof}

The effective reproduction number is influenced by both the opinion states and the proportion of the susceptible population. In particular, when the opinions are fixed, the susceptible proportion will always ensure 
that
the effective reproduction number 
decreases, 
since the recovered population will not be infected again. The opinion states will also have an influence on the change of the effective reproduction number in both directions: higher opinion states will lead to a lower effective reproduction number, and vice versa. As we mentioned in Section II, communities with stronger beliefs in the seriousness of the epidemic will take actions to avoid infections, leading to a lower effective reproduction number, and vice versa. 
Further, when all communities think the epidemic is extremely serious, 
$o_{\max}(t)=\boldsymbol{1}_n, \forall t\geq 0$. When  all communities think the epidemic is not worth treating at all during the pandemic, $o_{\min}(t)=\boldsymbol{0}$, $\forall t\geq 0$. Under 
the two extreme
situations, the effective reproduction number 
satisfies the following 
result.
\begin{corollary}
\label{Prop:bounds}
For all $t\geq 0$, the effective reproduction number $R_{o}(t)$ satisfies $R_{\min}(t)
\leq R_{o}(t)\leq R_{\max}(t)$, where 
\begin{align*}
    R_{\min}(t)=\rho\left(G^{-1}\left(o_{\max}\right)\tilde{S}\left(t\right)B\left(o_{\max}\right)\right),
\end{align*}

\vspace{-1.5ex}

\noindent
and
\vspace{-1.5ex}
\begin{align*}
   R_{\max}(t)=\rho\left(G^{-1}\left(o_{\min}\right)\tilde{S}(t)B\left(o_{\min}\right)\right).
\end{align*}
\end{corollary}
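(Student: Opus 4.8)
The plan is to fix an arbitrary $t\ge 0$ and reduce the statement to a monotonicity-in-opinion argument with the susceptible matrix $\tilde{S}(t)$ held fixed. By Lemma \ref{lem: well-defined}, $o(t)\in[0,1]^{n}$, so $o_{\min}=\boldsymbol{0}\le o(t)\le\boldsymbol{1}_{n}=o_{\max}$ entrywise. Hence it suffices to show that the map $o\mapsto G^{-1}(o)\tilde{S}(t)B(o)$ is entrywise non-increasing on $[0,1]^{n}$ and that every matrix encountered is irreducible and nonnegative, so that passing to spectral radii at the two extreme opinion vectors $\boldsymbol{0}$ and $\boldsymbol{1}_{n}$ yields the desired sandwich.

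First I would record the structural facts, exactly as in the proof of Proposition \ref{prop:Spe_r}: under Assumption \ref{A1}, for every $o\in[0,1]^{n}$ the matrix $G(o)=G_{\min}+(G-G_{\min})\tilde{O}$ (with $\tilde{O}=\diag(o)$) is a positive definite diagonal matrix, $\tilde{S}(t)$ is a positive definite diagonal matrix, and $B(o)=B-\tilde{O}(B-B_{\min})$ is an irreducible nonnegative matrix; consequently $G^{-1}(o)\tilde{S}(t)B(o)$ is an irreducible nonnegative matrix. Next comes the key step. For each edge, $[B(o)]_{ij}=(1-o_i)\beta_{ij}+o_i\beta_{\min}$ lies between $\beta_{\min}$ and $\beta_{ij}$ and is non-increasing in $o_i$, so raising any entry of $o$ only lowers entries of $B(o)$ while keeping $B(o)\ge B_{\min}$ (in particular, preserving irreducibility); and $[G^{-1}(o)]_{ii}=\bigl(\gamma_{\min}+(\gamma_i-\gamma_{\min})o_i\bigr)^{-1}$ is positive and non-increasing in $o_i$. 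Since $\tilde{S}(t)$ is a fixed nonnegative diagonal matrix, multiplying these together shows every entry of $G^{-1}(o)\tilde{S}(t)B(o)$ is non-increasing in $o$. This is precisely the opinion part of the argument used for statement 3) of Proposition \ref{prop:Spe_r}, here carried out with $\tilde{S}(t)$ frozen rather than also decreasing.

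Applying the monotonicity with $o=\boldsymbol{1}_{n}$ and $o=\boldsymbol{0}$ and using $\boldsymbol{0}\le o(t)\le\boldsymbol{1}_{n}$ gives
\[
G^{-1}(\boldsymbol{1}_{n})\tilde{S}(t)B(\boldsymbol{1}_{n})\ \le\ G^{-1}(o(t))\tilde{S}(t)B(o(t))\ \le\ G^{-1}(\boldsymbol{0})\tilde{S}(t)B(\boldsymbol{0}),
\]
with all three matrices irreducible and nonnegative. Taking $\rho(\cdot)$ and invoking Lemma \ref{lem:irr_non}, statement 4) (monotonicity of the spectral radius of irreducible nonnegative matrices under entrywise ordering) preserves the inequalities, which, since $o_{\max}=\boldsymbol{1}_{n}$ and $o_{\min}=\boldsymbol{0}$, is exactly $R_{\min}(t)\le R_o(t)\le R_{\max}(t)$; as $t\ge 0$ was arbitrary, this completes the argument. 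One could equivalently invoke statement 1) of Proposition \ref{prop:Spe_r} once the middle pair of inequalities is established.

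I expect the only real obstacle to be the bookkeeping in the key step: checking that the entrywise lower bound on $B(o)$ and the entrywise lower bound on the diagonal of $G^{-1}(o)$ compose correctly through the triple product, and that irreducibility and nonnegativity hold uniformly over $o\in[0,1]^{n}$ so that Lemma \ref{lem:irr_non} applies at each extreme opinion vector. This is not a deep difficulty — it reuses exactly the structure already established for Proposition \ref{prop:Spe_r} — but it is the step where a sign or ordering slip would most easily occur.
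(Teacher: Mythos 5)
Your proposal is correct and follows essentially the route the paper intends: the paper omits the proof of Corollary \ref{Prop:bounds}, stating it is similar to that of Proposition \ref{prop:Spe_r}, and your argument is exactly that — entrywise monotonicity of $G^{-1}(o)\tilde{S}(t)B(o)$ in $o$ with $\tilde{S}(t)$ frozen, followed by the spectral-radius monotonicity of Lemma \ref{lem:irr_non} applied at $o_{\min}=\boldsymbol{0}$ and $o_{\max}=\boldsymbol{1}_n$. No gaps beyond those already implicit in the paper's own treatment (e.g., tacitly taking $\tilde{S}(t)$ positive so the product stays irreducible).
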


The proof of Corollary \ref{Prop:bounds} is similar to the proof of Proposition \ref{prop:Spe_r}, thus omitted here. Corollary \ref{Prop:bounds} indicates that, given any time $t$, if the proportion of the susceptible population of each community are the same, the effective reproduction number is determined by the opinion states, where stronger beliefs in the seriousness of the epidemic lead to a lower effective reproduction number, and vice versa. Compared to the classical 
SIR model\cite{van2002reproduction}, where the effective reproduction number is monotonically decreasing with respect to the decreasement of the proportion of the susceptible population, under the influence of the opinions, the 
$R_o(t)$ defined in this work may not monotonically decrease. Therefore, $R_o(t)$  can lead to 
more diverse
behaviors in 
the
epidemic spreading process. In order to analyze the dynamical behavior of the epidemic-opinion model, we define a 
concept called peak infection time
to characterize the influence of the effective reproduction number $R_o(t)$ in determining the behavior of the epidemic.

\vspace{-1.5ex}

\subsection{Peak Infection Time}

To connect the effective reproduction number $R_o(t)$ to the behavior of the epidemic-opinion model, we denote 
\begin{equation}\label{eq:sigma}
    \sigma(t)=\sigma(\tilde{S}(t)B(o(t))-G(o(t)))
\end{equation}
and $p(t)$ as the spectral abscissa of $(\tilde{S}(t)B(o(t))-G(o(t)))$ and the corresponding normalized left eigenvector $\forall t\geq 0$, respectively. From Assumption \ref{A1} and Lemma \ref{lem:irr_M}, $(\tilde{S}(t)B(o(t))-G(o(t)))$ is an irreducible Metzler matrix, 
thus 
$\sigma(t)$, $\forall t\geq0$, is a positive real eigenvalue. Additionally, the normalized left eigenvector $p(t)$ satisfies $p(t)\gg \boldsymbol{0}$ and  $p^\top(t)\boldsymbol{1}_n=1$, $\forall t\geq0$. Then, we define a weighted average of the epidemic states, for a given $t_1\in[t_0,t_2]$, $p^\top(t_1)x(t)$ as a metric to reflect the trend of the epidemic over the time interval $[t_0, t_2]$. 
Based on the properties of $\sigma(t)$ and $p(t)$, we have $p^\top(t_1) x(t)\geq 0$, $\forall t \geq 0$ and $p^\top(t_1) x(t)= 0$ if and only if $x(t)=\boldsymbol{0}$. 
Therefore, $p^\top(t_1) x(t)$ reflects the overall trend of the epidemic spreading over the time interval $[t_0,t_2]$, and $p^\top(t_1)x(t)=0$ if and only if the epidemic has died out. 
\begin{definition}{\label{def:PT}} [Peak Infection Time $t_p$]
A
Peak Infection Time $t_p$ is defined as a turning point, 
where $p^\top(t_p) x(t)$ is increasing for all $t\in[t_0,t_p)$ and $p^\top(t_p)x(t)$ is decreasing for all $t\in(t_p,t_1]$, for 
sufficiently small time intervals $(t_p-t_0)>0$ and $(t_1-t_p)>0$.
\end{definition}

\noindent 
The peak infection time describes 
a point in time
where the weighted average  of the infected proportions $p^\top(t_p) x(t)$ 
over the communities reaches a local peak value over the time interval $[t_0, t_1]$. 

\begin{theorem}
\label{thm:1}
Given a peak infection time $t_p$, we have
$R_{o}(t_p)=1$, $R_{o}(t)>1$, for $t\in [t_0, t_p)$ and $R_{o}(t)<1$, for $t \in (t_p, t_1]$, 
for $t_p-t_0>0$ and  $t_1-t_p>0$ 
sufficiently small. 
\end{theorem}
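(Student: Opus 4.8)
The plan is to connect the sign of the time derivative of the weighted average $p^\top(t_p)x(t)$ at $t_p$ to the value $R_o(t_p)$, and then use the monotonicity results of Proposition~\ref{prop:Spe_r} to propagate the strict inequalities to the one-sided neighborhoods. First I would compute, along trajectories of \eqref{I-O-Final}, the derivative of $g(t) := p^\top(t_p)x(t)$; since $p(t_p)$ is a fixed vector, $\dot g(t) = p^\top(t_p)\dot x(t) = p^\top(t_p)(\tilde S(t)B(o(t)) - G(o(t)))x(t)$. Evaluating at $t=t_p$ and using that $p(t_p)$ is the left eigenvector of $\tilde S(t_p)B(o(t_p)) - G(o(t_p))$ for the eigenvalue $\sigma(t_p)$ (from \eqref{eq:sigma}), we get $\dot g(t_p) = \sigma(t_p)\, p^\top(t_p)x(t_p) = \sigma(t_p)\, g(t_p)$.

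Next I would argue that at a peak infection time $g$ has a local maximum, so $\dot g(t_p) = 0$; since $g(t_p) = p^\top(t_p)x(t_p) > 0$ (because the epidemic has not died out near $t_p$, as $g$ is increasing then decreasing and hence positive on a punctured neighborhood, and $p(t_p)\gg\boldsymbol 0$), we conclude $\sigma(t_p) = 0$. Writing $\tilde S(t_p)B(o(t_p)) - G(o(t_p)) = -G(o(t_p)) + \tilde S(t_p)B(o(t_p))$ with $-G(o(t_p))$ a negative diagonal matrix and $\tilde S(t_p)B(o(t_p))$ an irreducible nonnegative matrix (both facts from Assumption~\ref{A1}, as noted in the proof of Proposition~\ref{prop:Spe_r}), Lemma~\ref{lem:irr_spe} gives $\sigma(t_p) = 0 \iff \rho(G^{-1}(o(t_p))\tilde S(t_p)B(o(t_p))) = 1$, i.e. $R_o(t_p) = 1$.

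For the one-sided statements I would proceed analogously. On $[t_0,t_p)$, Definition~\ref{def:PT} says $g$ is strictly increasing, so for $t$ slightly below $t_p$ the derivative $\dot g(t) = \sigma(\tilde S(t)B(o(t)) - G(o(t)))\, p^\top(t_p)x(t)$ evaluated using $p^\top(t_p)$ as a test vector does not immediately give $\sigma(t)>0$ — this is the subtle point, because $p(t_p)$ is only the eigenvector at $t_p$. The cleaner route is to use $p(t)$ itself: $p^\top(t)\dot x(t) = \sigma(t)\, p^\top(t)x(t)$, and since the metric $p^\top(t_1)x(t)$ for a fixed reference time is what Definition~\ref{def:PT} monotonically controls, one invokes continuity of $p(\cdot)$ and $\sigma(\cdot)$ plus the fact that on a sufficiently small interval the sign of the growth of the weighted average agrees with $\operatorname{sign}\sigma(t)$; increasing on $[t_0,t_p)$ then forces $\sigma(t)>0$, and Lemma~\ref{lem:irr_spe} converts this to $R_o(t)>1$. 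Symmetrically, decreasing on $(t_p,t_1]$ forces $\sigma(t)<0$ and hence $R_o(t)<1$. Alternatively, since Proposition~\ref{prop:Spe_r} part~2 shows $R_o$ is strictly decreasing in $s(t)$ and $s(t)$ is strictly decreasing (Lemma~\ref{lem:decreasing}) whenever $x(t)\neq\boldsymbol 0$, once $R_o(t_p)=1$ is established one gets $R_o(t)>R_o(t_p)=1$ for $t<t_p$ and $R_o(t)<1$ for $t>t_p$ directly on a small interval where the opinion contribution does not reverse the trend — though this requires controlling the opinion term via part~3, which is why "sufficiently small" appears in the statement.

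The main obstacle is the middle step's rigor: justifying that at the peak time the relevant derivative vanishes and that the one-sided monotonicity of $p^\top(t_p)x(t)$ (a metric with a \emph{frozen} weight vector) translates into the sign of $\sigma(t)$ (an eigenvalue with a \emph{moving} eigenvector). Handling this cleanly needs either a careful continuity/uniformity argument on the small interval $[t_0,t_1]$, or a reduction to Proposition~\ref{prop:Spe_r} that carefully bounds the competing effects of the decreasing susceptible fraction and the evolving opinions — which is exactly what the qualifier "for $t_p-t_0>0$ and $t_1-t_p>0$ sufficiently small" is doing.
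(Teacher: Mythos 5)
Your proposal is correct and follows essentially the same route as the paper: multiply \eqref{I-O-Final} by the frozen left eigenvector $p^\top(t_p)$, use vanishing of the derivative at the peak together with $p^\top(t_p)x(t_p)>0$ to get $\sigma(t_p)=0$ and hence $R_o(t_p)=1$ via Lemma~\ref{lem:irr_spe}, then use continuity of $p(\cdot)$ and $\sigma(\cdot)$ on the sufficiently small one-sided intervals to convert the monotonicity of $p^\top(t_p)x(t)$ into $\sigma(t)>0$ (resp.\ $<0$) and thus $R_o(t)>1$ (resp.\ $<1$). The subtlety you flag about the frozen versus moving eigenvector is exactly the point the paper handles (somewhat informally) with the approximation $p(\tau)\approx p(t_p)$ justified by the smallness of the interval.
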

\begin{proof}
First we show that for a peak infection time $t_p$, we have $R_o(t_p)=1$. 
Since $\frac{d}{dt}(p^{\top}(t_p)x(t))$ is a continuous function over the time interval $[t_0, t_1]$, based on Definition~\ref{def:PT}, $p^\top(t_p) x(t)$ is increasing for all $t \in [t_0,t_p)$ and 
decreasing for all $t \in (t_p,t_1]$. Therefore, by continuity,
$\frac{d}{dt}(p^{\top}(t_p)x(t))=0$ at time $t_p$.
Using this fact and multiplying $p^\top(t_p)$ on both sides of \eqref{I-O-Final}, 
we have
\begin{align*}
  0 &=
  \frac{d}{dt}(p^{\top}(t_p)x(t))\Big|_{t=t_p} 
  \\
 & =(p^{\top}(t_p))(\tilde{S}\left(t\right)B\left(o(t_p)\right)-G\left(o(t_p)\right))x\left(t_p\right)\\
 & = \sigma(t_p)p^{\top}(t_p)x\left(t_p\right)
 ,
\end{align*}
where the third equality follows from the definition of $\sigma(t_p)$ in \eqref{eq:sigma}.
Recall that $p(t_p)x(t)>0$ unless $x(t)=x_e=\boldsymbol{0}$. Thus, for $\frac{d}{dt}(p^{\top}(t_p)x(t_p))=0$, we must have $\sigma(t_p)=0$. Therefore, from Definition \ref{def:Rt} and Lemma \ref{lem:irr_spe}, $R_o(t_p)=1$.

Now we consider the time interval $[t_0, t_p)$. Since $x(t)$ and $o(t)$ are continuous functions in~$t$, 
$\sigma(t)$
is also continuous in~$t$. 
Then, for a given time $\tau \in [t_0, t_p)$, since $[t_0, t_p)$ is a sufficiently small time interval, we have $p(\tau)\approx p(t_p)$, by continuity.
From Definition \ref{def:PT}, since $p^\top(t_p) x(t)$ is increasing for all $t\in[t_0,t_p)$, $\frac{d}{dt}(p^{\top}(t_p)x(t))>0$ for $t\in [t_0, t_p)$. Using this fact and
multiplying $p^\top(t_p)$ on both sides of \eqref{I-O-Final}, 
\begin{align*}
 0&<\frac{d}{dt}(p^{\top}(t_p)x(t))\Big|_{t=\tau}\\
&=(p^{\top}(t_p))(\tilde{S}\left(\tau\right)B\left(o(\tau)\right)-G\left(o(\tau)\right))x\left(\tau\right)\\
&\approx (p^{\top}(\tau))(\tilde{S}\left(\tau\right)B\left(o(\tau)\right)-G\left(o(\tau)\right))x\left(\tau\right)\\
&=\sigma(\tau)(p^{\top}(\tau))x(t_\tau).
\end{align*}
Since $p^{\top}(t)x(\tau)>0$, for $x(\tau)>0$, we have that $\sigma(\tau)>0$. Therefore, by Lemma \ref{lem:irr_spe}, $R(\tau)>1$ for any time $\tau \in [t_0,t_p)$.
Following the same analysis techniques, we can show that 
$R(\tau)<1$, for all $\tau\in(t_p,t_1]$, given a sufficiently small time interval $(t_p-t_1)>0$, completing the proof.
\end{proof}

Note that $R_o(t)=1$ is a necessary condition for the peak infection time, thus the condition does not guarantee that $t$ is the peak infection time. 
From Proposition 1, $R_o(t)$ is not a monotonic function with respect to $t$. Consider the case that $R_o(t_1)=1$, if, for $\epsilon>0$, $R_o(t_1-\epsilon)<1$ and $R_o(t_1+\epsilon)>1$, the time $t_1$ is not the peak infection time. Additionally, from Lemma \cite[Sec. 2.1 and Lemma 2.3]{varga2009matrix_book}, $p(t_p)$ is unique for a peak infection time $t_p$. 



For $\forall t\in [t_1,t_2]$, from Lemma \ref{lem:decreasing} and \eqref{I-O-Final}, we have
\begin{equation}
(\tilde{S}\left(t_1\right)B\left(o_{\min}\right)-G\left(o_{\min}\right))\geq (\tilde{S}\left(t\right)B\left(o(t)\right)-G\left(o(t)\right).  
\label{eq:ineq}
\end{equation}
Based on Corollary \ref{Prop:bounds}, $R_{\min}\leq R_o(t)\leq R_{\max}(t)$, $\forall t\geq0$, 
we define $\sigma_{\max}(t)=\sigma(\tilde{S}(t)B(o_{\min})-G^{-1}(o_{\min}))$, corresponding to $R_{\max}(t)$. 
Since $(\tilde{S}(t)B(o(t))-G^{-1}(o(t))$ is a Metzler matrix $\forall t$, from Lemma \ref{lem:Metzler}, we have $\sigma_{\max}(t_1)\geq \sigma_{\max}(t)\geq \sigma(t)$, $\forall t\geq t_1$.
Then, we define $p_{\max}(t_1)$ corresponding to $\sigma_{\max}(t_1)$, and  multiplying $p_{\max}(t_1)$, on both sides of \eqref{I-O-Final},

\vspace{-2ex}

\small
\begin{align*}
  \frac{d}{dt}(p_{\max}^{\top}(t_1)x(t)) &=p_{\max}^{\top}(t_1)((\tilde{S}\left(t\right)B\left(o(t)\right)-G\left(o(t)\right))x\left(t\right)).
\end{align*}

\vspace{-1ex}

\normalsize

\noindent
Then, based on \eqref{eq:ineq},

\vspace{-2ex}

\small
\begin{align}
\frac{d}{dt}(p_{\max}^{\top}(t_1)x(t)) 
 &\leq p_{\max}^{\top}(t_1)((\tilde{S}\left(t_1\right)B\left(o_{\min}\right)-G\left(o_{\min}\right))x\left(t\right)) \nonumber \\
 & = \sigma_{\max}(t_1)p_{\max}^\top (t_1)x\left(t\right), \label{eq:cor}
\end{align}

\vspace{-1ex}

\normalsize

\noindent
which leads to
\[
p_{\max}^{\top}(t_1)x\left(t\right)\leq p_{\max}^{\top}(t_1)x(t_1)e^{(\sigma_{\max}(t_1))t},
\]
for any $t\geq t_1$.
The inequality listed above indicates that, when $\sigma_{\max}(t_1)<0$, the weighted average $p_{\max}^{\top}(t_1)x(t)$ will decrease exponentially fast to zero, $\forall t\geq t_1$, implying that $x(t)$ will decrease exponentially fast to zero. From Lemma~\ref{lem:irr_spe} and 
Corollary~\ref{Prop:bounds}, $\sigma_{\max}(t_1)<0$ leads to $R_{\max}(t_1)<1$, which guarantees $R_{o}(t)<1$, $\forall t\geq t_1$. Hence, we have the following corollary, where we define $t_1$ as $t_f$. 
\begin{corollary} 
\label{cor:c1}
If $R_{\max}(t_f)<1$, there will exist no peak infection time in 
$(t_f, \infty)$,
and $p_{\max}^{\top}(t_f)x(t)$ $\forall t\geq t_f$  will monotonically decrease to zero exponentially fast, indicating that the epidemic will die out exponentially fast.
\end{corollary}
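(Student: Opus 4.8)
The plan is to leverage the exponential bound established immediately before the corollary statement, namely
\[
p_{\max}^{\top}(t_f)x(t)\leq p_{\max}^{\top}(t_f)x(t_f)e^{\sigma_{\max}(t_f)t},\quad \forall t\geq t_f,
\]
together with Corollary~\ref{Prop:bounds} and Lemma~\ref{lem:irr_spe}. First I would note that, by Lemma~\ref{lem:irr_spe} and Definition~\ref{def:Rt}, the hypothesis $R_{\max}(t_f)<1$ is equivalent to $\sigma_{\max}(t_f)<0$, since $\sigma_{\max}(t_f)$ is the spectral abscissa of the Metzler matrix $\tilde{S}(t_f)B(o_{\min})-G(o_{\min})$ and $R_{\max}(t_f)=\rho(G^{-1}(o_{\min})\tilde{S}(t_f)B(o_{\min}))$. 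With $\sigma_{\max}(t_f)<0$, the exponential bound forces $p_{\max}^{\top}(t_f)x(t)\to 0$ as $t\to\infty$ at an exponential rate. Since $p_{\max}(t_f)\gg\boldsymbol{0}$ and $x(t)\geq\boldsymbol{0}$, the quantity $p_{\max}^{\top}(t_f)x(t)$ upper-bounds each component $[x(t)]_i$ up to the constant $1/\min_i [p_{\max}(t_f)]_i$, so $x(t)\to\boldsymbol{0}$ exponentially fast as well; this is the "epidemic dies out exponentially fast" conclusion.

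Next I would establish monotone decrease of $p_{\max}^{\top}(t_f)x(t)$ on $[t_f,\infty)$. This follows directly from the inequality~\eqref{eq:cor} derived just above the corollary: for all $t\geq t_f$,
\[
\frac{d}{dt}\bigl(p_{\max}^{\top}(t_f)x(t)\bigr)\leq \sigma_{\max}(t_f)\,p_{\max}^{\top}(t_f)x(t).
\]
Since $\sigma_{\max}(t_f)<0$ and $p_{\max}^{\top}(t_f)x(t)\geq 0$ with equality only when $x(t)=\boldsymbol{0}$, the right-hand side is strictly negative whenever $x(t)\neq\boldsymbol{0}$, so the derivative is strictly negative there; hence $p_{\max}^{\top}(t_f)x(t)$ is monotonically (strictly, until extinction) decreasing on $(t_f,\infty)$.

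For the "no peak infection time in $(t_f,\infty)$" claim, I would argue by contradiction using Theorem~\ref{thm:1}: if some $t_p\in(t_f,\infty)$ were a peak infection time, then by Theorem~\ref{thm:1} we would need $R_o(t_p)=1$ and $R_o(t)>1$ for $t$ just below $t_p$. But from Lemma~\ref{lem:decreasing} the relevant matrices only shrink in the Metzler order as $t$ grows past $t_f$ (this is exactly~\eqref{eq:ineq}), so by Lemma~\ref{lem:Metzler} we get $\sigma(t)\leq\sigma_{\max}(t)\leq\sigma_{\max}(t_f)<0$ for all $t\geq t_f$, i.e. $R_o(t)<1$ for all $t\geq t_f$ by Lemma~\ref{lem:irr_spe}. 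This contradicts the requirement $R_o(t)>1$ on an interval below $t_p$, so no such $t_p$ exists.

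The main obstacle is not any single step but making the chain of implications airtight regarding which eigenvector/spectral-abscissa objects are being compared: the bound~\eqref{eq:cor} is stated with the fixed weight vector $p_{\max}(t_f)$, not the time-varying $p(t)$, so I must be careful that the exponential estimate genuinely holds for all $t\geq t_f$ rather than only on a small interval, and that the monotonicity and extinction conclusions are phrased in terms of that same fixed weighted average. Once the equivalence $R_{\max}(t_f)<1\iff\sigma_{\max}(t_f)<0$ is invoked and the preceding display is cited, the rest is a short assembly; I would keep the write-up brief, essentially a two-line deduction from the already-derived inequality plus the contradiction argument via Theorem~\ref{thm:1}.
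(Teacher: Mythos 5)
Your proposal is correct and follows essentially the same route as the paper, whose "proof" is precisely the paragraph preceding the corollary: the fixed-weight exponential bound with $p_{\max}(t_f)$, the equivalence $R_{\max}(t_f)<1\iff\sigma_{\max}(t_f)<0$ via Lemma~\ref{lem:irr_spe}, and the Metzler-order comparison (Lemmas~\ref{lem:decreasing} and~\ref{lem:Metzler}) giving $R_o(t)<1$ for all $t\geq t_f$, which together with Theorem~\ref{thm:1} rules out any peak infection time after $t_f$. Your only additions — making the contradiction via Theorem~\ref{thm:1} explicit and noting that $p_{\max}(t_f)\gg\boldsymbol{0}$ lets the weighted average bound each component of $x(t)$ — are faithful elaborations of what the paper leaves implicit.
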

%
%
\noindent
Corollary \ref{cor:c1} connects $R_o(t)$ to the behavior of the epidemic process. In particular, for $t_f=0$, at the beginning stages of the epidemic, even with every community ignoring the epidemic, we still have $R_{\max}(0)<1$ which means that the epidemic is serious, and will disappear quickly. 

In addition to Corollary \ref{cor:c1}, Theorem \ref{thm:1} also implies that, if the effective reproduction number $R_{\min}(t)$ at the beginning stages of the epidemic is greater than $1$,  before disappearing, there must exist at least one peak infection time.
This phenomenon is captured by the following corollary. 

\begin{corollary}
\label{cor:c2}
If $R_{\min}(0)>1$, then
\begin{enumerate}
    \item there will be at least one peak infection time $t_p$;
    \item $p_{\min}^{\top}(t_0)x(t)$ will increase exponentially fast before reaching a peak infection time $t_p$; 
\end{enumerate}
\end{corollary}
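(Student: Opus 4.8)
The plan is to replay, with reversed inequalities and with $o_{\max}=\boldsymbol{1}_n$ playing the role of $o_{\min}$, the argument that immediately precedes Corollary~\ref{cor:c1}. First I would establish the basic matrix inequality: since $o(t)\le o_{\max}$ we have $B(o(t))\ge B(o_{\max})=B_{\min}$ and $G(o(t))\le G(o_{\max})=G$, and since $s(t)$ is monotonically decreasing (Lemma~\ref{lem:decreasing}) we have $\tilde{S}(t)\ge\tilde{S}(t_1)$ for all $t\in[t_0,t_1]$, so that
\[
\tilde{S}(t)B(o(t))-G(o(t))\ \ge\ \tilde{S}(t_1)B(o_{\max})-G(o_{\max}),\qquad t\in[t_0,t_1],
\]
with both sides irreducible Metzler matrices (Assumption~\ref{A1}). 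Setting $\sigma_{\min}(t_1)=\sigma\bigl(\tilde{S}(t_1)B(o_{\max})-G(o_{\max})\bigr)$, Lemma~\ref{lem:Metzler} gives $\sigma(t)\ge\sigma_{\min}(t_1)$ on $[t_0,t_1]$, while Lemma~\ref{lem:irr_spe} turns the hypothesis $R_{\min}(0)>1$ into $\sigma_{\min}(t_0)>0$; since $s(t)$ is continuous, $\sigma_{\min}(t_1)\to\sigma_{\min}(t_0)$ as $t_1\downarrow t_0$, so I may fix $t_1>t_0$ with $t_1-t_0$ small enough that $\sigma_{\min}(t_1)>0$.

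For item~2, I would take $p_{\min}(t_1)\gg\boldsymbol{0}$ to be the normalized left eigenvector of $\tilde{S}(t_1)B(o_{\max})-G(o_{\max})$ associated with $\sigma_{\min}(t_1)$ (Lemma~\ref{lem:irr_M}), left-multiply both sides of \eqref{I-O-Final} by $p_{\min}^\top(t_1)$, and use the matrix inequality above together with $x(t)\ge\boldsymbol{0}$ to get
\[
\frac{d}{dt}\bigl(p_{\min}^\top(t_1)x(t)\bigr)\ \ge\ \sigma_{\min}(t_1)\,p_{\min}^\top(t_1)x(t),\qquad t\in[t_0,t_1].
\]
Gr\"onwall's inequality then gives $p_{\min}^\top(t_1)x(t)\ge p_{\min}^\top(t_1)x(t_0)\,e^{\sigma_{\min}(t_1)(t-t_0)}$ on $[t_0,t_1]$; since $\sigma_{\min}(t_1)>0$ and $p_{\min}^\top(t_1)x(t_0)>0$ (the epidemic is present initially, $x(t_0)\ne\boldsymbol{0}$), this is the asserted exponentially fast increase, and for $t_1$ close to $t_0$ one has $p_{\min}(t_1)\approx p_{\min}(t_0)$, matching the vector written in the statement.

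For item~1, item~2 in particular shows $\sigma(t)>0$, i.e.\ $R_o(t)>1$, on a right-neighborhood of $t_0$, so $p^\top(\cdot)x(t)$ is strictly increasing immediately after $t_0$. I would then use that the epidemic dies out: $s(t)$ is decreasing and bounded below, hence convergent, and from $\dot{r}_i(t)=\bigl(\gamma_{\min}+(\gamma_i-\gamma_{\min})o_i(t)\bigr)x_i(t)\ge\gamma_{\min}x_i(t)\ge0$ together with $r_i(t)\le1$ we obtain $\int_{t_0}^{\infty}x_i(t)\,dt\le\gamma_{\min}^{-1}$, which with the boundedness of $\dot{x}$ on the compact invariant set forces $x(t)\to\boldsymbol{0}$, hence $p^\top(t)x(t)\to0$. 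A continuous, strictly positive function that increases right after $t_0$ and tends to $0$ must possess a time $t_p$ at which it turns from increasing to decreasing on a small neighborhood; by Definition~\ref{def:PT} such a $t_p$ is a peak infection time, and consistently with Theorem~\ref{thm:1} it satisfies $R_o(t_p)=1$.

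The main obstacle is item~1 rather than item~2. Because $R_o(t)$ need not be monotone (Proposition~\ref{prop:Spe_r}), $R_o$ may equal $1$ at several times, so the existence of a peak cannot be read off from a single crossing of $R_o$; instead it has to be extracted from the two global facts that $p^\top(\cdot)x$ increases right after $t_0$ and that $x(t)\to\boldsymbol{0}$, together with continuity, and one must check that the turning point produced is a genuine increase-then-decrease point in the sense of Definition~\ref{def:PT}. A secondary technicality, already met in the proof of Theorem~\ref{thm:1}, is justifying the approximations valid for small $t_1-t_0$ (that $p_{\min}(t_1)\approx p_{\min}(t_0)$ and $\sigma_{\min}(t_1)\approx\sigma_{\min}(t_0)$), which is legitimate because the relevant Perron eigenvalue and eigenvector are simple (Lemmas~\ref{lem:irr_M} and \ref{lem:irr_non}) and hence vary continuously with the matrix entries.
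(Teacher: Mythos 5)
Your proposal is correct in substance. For item 2 it follows essentially the paper's route: build the comparison matrix with $o_{\max}$ (so $B(o_{\max})=B_{\min}$, $G(o_{\max})=G$), take its Perron left eigenvector, multiply \eqref{I-O-Final}, and conclude exponential growth while the associated spectral abscissa is positive. What you do differently is anchor the comparison matrix at the right endpoint $t_1$, which makes the entrywise bound $\tilde{S}(t)B(o(t))-G(o(t))\geq \tilde{S}(t_1)B(o_{\max})-G(o_{\max})$ genuinely valid on all of $[t_0,t_1]$ (since $\tilde{S}(t)\geq\tilde{S}(t_1)$ by Lemma \ref{lem:decreasing}), and then apply Gr\"onwall; the paper instead anchors at $t_0$ and writes an exact eigen-identity for $\frac{d}{dt}\bigl(p_{\min}^{\top}(t_0)x(t)\bigr)$, which only holds approximately because $p_{\min}(t_0)$ is an eigenvector of the bounding matrix rather than of $\tilde{S}(t_0)B(o(t_0))-G(o(t_0))$, and because $\tilde{S}(t)\leq\tilde{S}(t_0)$ works against the needed inequality for $t>t_0$. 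Your anchoring buys a rigorous differential inequality at the modest cost of a continuity argument keeping $\sigma_{\min}(t_1)>0$ for $t_1$ near $t_0$. For item 1 you take a genuinely different route: the paper argues that the monotone, bounded $s(t)$ must settle, that a nonzero limiting $x$ would contradict the equilibrium structure of Lemma \ref{lem:Equi}, and hence that $R_o$ must eventually fall below $1$ while $R_{\min}(0)>1$ forces $R_o>1$ initially, then invokes Theorem \ref{thm:1}; you instead prove $x(t)\to\boldsymbol{0}$ directly from $\dot{r}_i\geq\gamma_{\min}x_i$ and $r_i\leq 1$ (so $\int x_i\,dt\leq\gamma_{\min}^{-1}$, with bounded $\dot{x}$ giving the limit), and then extract a turning point of the positive, initially increasing weighted average. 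Your version is more self-contained---it does not lean on convergence facts the paper only establishes later in Theorem \ref{thm:2}, and it avoids the slightly backwards use of Theorem \ref{thm:1} (which goes from a peak to the $R_o$ crossing, not conversely)---while the paper's version is shorter and stays entirely at the level of $R_o$. The one step both treatments leave informal, which you correctly flag, is checking that the extracted maximizer is an increase-then-decrease point with the weight $p(t_p)$ evaluated at the peak itself, as Definition \ref{def:PT} requires; this is no worse than the paper's own level of rigor.
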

\begin{proof}
1) First we prove that when $R_{\min}(0)>1$, there will be at least one peak infection time $t_p$. 
Since $\dot{s}(t)\leq \boldsymbol{0}$, and $s(t)$ is lower-bounded by $\boldsymbol{0}$, we must have 
an equilibrium at $t=t_1$ when $\dot{s}(t_1)=\boldsymbol{0}$. Consider the case $s(t_1)=\boldsymbol{0}$ leading to $\dot{s}(t_1)=\boldsymbol{0}$. If $x(t_1)\neq \boldsymbol{0}$, from \eqref{I-O-Final}, we have $\dot{x}(t_1)=-G_o(t_1)x(t_1)< \boldsymbol{0}$, which violates Lemma~\ref{lem:Equi} that $(\boldsymbol{0},\boldsymbol{0},\boldsymbol{1}_n)$ is an equilibrium of the system. Therefore, in order to ensure $x(t)$ converges to $x(t_1)=\boldsymbol{0}$, based on Definition~\ref{def:PT} and Theorem \ref{thm:1}, 
for $\epsilon>0$, there must exist a moment $t_p$ where $t_p+\epsilon<t_1$, such that $R_o(t_p+\epsilon)<1$. Consider another case that $s(t_1)\neq\boldsymbol{0}$: to ensure $\dot{s}(t_1)=\boldsymbol{0}$,  we must have $x(t_1)=\boldsymbol{0}$. Thus, for the same reason, there must exist a time $t_p+\epsilon<t_1$ such that $R_o(t_p+\epsilon)<1$. Additionally, Since $R_{\min}(0)>1$, we can conclude that for both cases, there exists a moment $t_p-\epsilon<t_p$ such that $R_o(t_p-\epsilon)>1$. Therefore, from Theorem \ref{thm:1}, we have proved statement~1).

2) Consider one peak infection time $t_p$, and a time interval $[t_0, t_p-\delta_t]$, for sufficiently small $\delta_t>0$. Based on Lemma~\ref{lem:Metzler}, we have
$\sigma_{\min}(t_0)=\sigma(\tilde{S}(t_0)B(o_{\max})-G^{-1}(o_{\max}))>0$, pairing with the normalized left eigenvector $p_{\min}^{\top}(t_0)$.  
Multiplying $p_{\min}^\top(t_0)$ on both sides of \eqref{I-O-Final}, and evaluating the derivative at $t_0$,
\vspace{-1ex}
\begin{align*}
    \frac{d}{dt}(p_{\min}^{\top}(t_0)x(t))|_{t=t_0} &
    = \sigma_{\min}(t_0)p^{\top}(t_0)x\left(t_0\right),
\end{align*}
$\forall t \in [t_0, t_p-\delta_t]$.
Since
$\sigma_{\min}(t_0)>0$, $p_{\min}^{\top}(t_0)x(t)$ grows exponentially fast 
for $t\in[t_0, t_p-\delta_t]$, completing the proof. 
\end{proof}


\vspace{-1ex}

By combining Corollaries \ref{cor:c1} and \ref{cor:c2} with Theorem \ref{thm:1}, 
we can connect 
the
behavior of the system 
in
\eqref{eq:S-O-C}-\eqref{eq:O-I-C} to the peak infection time of the system in the following theorem.

\begin{theorem}
\label{thm:2}
For the epidemic-opinion model 
in
\eqref{eq:S-O-C}-\eqref{eq:O-I-C},
the system will converge to an equilibrium $(s_{e},\boldsymbol{0},(\bar{L}+I_{n})^{-1}(\boldsymbol{1}_{n}-s_{e}))$, and the convergence is exponentially fast. 
\end{theorem}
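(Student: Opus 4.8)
The plan is to establish Theorem~\ref{thm:2} by combining the monotonicity and boundedness already available with the spectral machinery of Corollaries~\ref{cor:c1} and~\ref{cor:c2} and Theorem~\ref{thm:1}. First I would argue \emph{convergence to an equilibrium of the stated form}. By Lemma~\ref{lem:decreasing} each $s_i(t)$ is monotonically decreasing and, by Lemma~\ref{lem: well-defined}, bounded below by $0$; hence $s(t)\to s_e$ for some $s_e\in[0,1]^n$ and, since the $\dot s_i$ are uniformly continuous along the bounded trajectory, $\dot s(t)\to\boldsymbol{0}$. From \eqref{S-O-Final} this forces $\tilde{S}(t)B(o(t))x(t)\to\boldsymbol{0}$. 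One then shows $x(t)\to\boldsymbol{0}$: the remaining possibility, that $x(t)$ stays bounded away from $\boldsymbol{0}$ along some subsequence while $s(t)\to s_e$, is ruled out exactly as in the proof of Corollary~\ref{cor:c2}, since a persistent positive $x$ with $\dot s\to\boldsymbol{0}$ is incompatible with $\dot x = \tilde S(t)B(o(t))x - G(o(t))x$ and the structure of the equilibrium set in Lemma~\ref{lem:Equi}. Once $s(t)\to s_e$ and $x(t)\to\boldsymbol{0}$, the opinion equation \eqref{eq:O-I-C}, namely $\dot o(t) = (\boldsymbol{1}_n - s(t)) - (\bar L + I_n)o(t)$, is an asymptotically autonomous linear system whose limiting dynamics $\dot o = (\boldsymbol{1}_n - s_e) - (\bar L + I_n)o$ is exponentially stable (since $\bar L + I_n$ is positive definite by Lemma~\ref{lem:Equi}'s argument), so $o(t)\to(\bar L + I_n)^{-1}(\boldsymbol{1}_n - s_e)$, which is the claimed equilibrium.

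Next I would establish the \emph{exponential rate}. The key observation is that, since $s(t)$ is monotone and bounded, $R_o(t)$ is eventually below $1$: more precisely, $R_{\max}(t)=\rho(G^{-1}(o_{\min})\tilde S(t)B(o_{\min}))$ is continuous and strictly decreasing in $s(t)$ (Proposition~\ref{prop:Spe_r}, statement~2, applied with $o\equiv o_{\min}$), and as $s(t)\downarrow s_e$ we are guaranteed that $x(t)\to\boldsymbol 0$; I would argue that this forces $\sigma(t)\le 0$ for all sufficiently large $t$, for otherwise $\sigma(t)\ge\varepsilon>0$ on an interval would (via the eigenvector estimate used in Corollary~\ref{cor:c2}) make $p_{\min}^\top(\cdot)x(t)$ grow, contradicting $x(t)\to\boldsymbol 0$. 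Pick $t_f$ large enough that $R_{\max}(t_f)<1$, equivalently $\sigma_{\max}(t_f)<0$. Then Corollary~\ref{cor:c1} gives directly $p_{\max}^\top(t_f)x(t)\le p_{\max}^\top(t_f)x(t_f)\,e^{\sigma_{\max}(t_f)(t-t_f)}$ for all $t\ge t_f$, and since $p_{\max}(t_f)\gg\boldsymbol 0$ with $p_{\max}^\top(t_f)\boldsymbol{1}_n=1$, every component $x_i(t)$ decays at rate at least $|\sigma_{\max}(t_f)|$. Feeding $\|x(t)\|=O(e^{\sigma_{\max}(t_f)t})$ back into $\dot s(t)=-\tilde S(t)B(o(t))x(t)$ gives $\|s(t)-s_e\|=O(e^{\sigma_{\max}(t_f)t})$, and then the exponentially stable linear opinion dynamics above yield $\|o(t)-o_e\|=O(e^{-\alpha t})$ for $\alpha=\min\{|\sigma_{\max}(t_f)|,\lambda_{\min}(\bar L + I_n)\}$. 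Hence the whole state converges exponentially fast.

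The main obstacle I anticipate is making rigorous the step ``$\dot s(t)\to\boldsymbol 0$ implies $x(t)\to\boldsymbol 0$'' and the closely related claim that $\sigma(t)$ is eventually nonpositive. The subtlety is that $R_o(t)$ is \emph{not} monotone (Proposition~\ref{prop:Spe_r} and the remark after Theorem~\ref{thm:1}), so one cannot simply say ``$R_o$ crosses $1$ once and stays below''; instead one must use that $s(t)$ converges and exploit the sandwich $R_{\min}(t)\le R_o(t)\le R_{\max}(t)$ with $R_{\max}(t)$ genuinely monotone in $s(t)$, together with a compactness/Barbalat-type argument on the bounded trajectory in $[0,1]^{3n}$. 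A clean way around the $\varepsilon$-$\delta$ fuss about ``sufficiently small intervals'' in Definition~\ref{def:PT} and Theorem~\ref{thm:1} is to invoke Corollary~\ref{cor:c1} only once, at a single time $t_f$ where $R_{\max}(t_f)<1$ holds, rather than trying to track $R_o(t)$ pointwise; the existence of such a $t_f$ follows because $s(t)$ is eventually within any prescribed neighborhood of $s_e$ and $R_{\max}$ depends continuously on $s$. Everything downstream of $t_f$ is then the routine exponential bookkeeping sketched above.
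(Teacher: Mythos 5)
Your overall route is the same as the paper's: get exponential decay of $x(t)$ by applying Corollary~\ref{cor:c1} at a single time $t_f$, feed that into \eqref{eq:S-O-C} to get exponential convergence of $s(t)$ to $s_e$, and then treat \eqref{eq:O-I-C} as an exponentially stable linear system driven by an exponentially convergent input to conclude $o(t)\to(\bar L+I_n)^{-1}(\boldsymbol{1}_n-s_e)$. The extra first phase (monotonicity of $s$, Barbalat-type argument for $\dot s\to\boldsymbol 0$, asymptotically autonomous opinion dynamics) is a sensible supplement that the paper does not spell out.

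However, there is a genuine gap in how you justify the existence of $t_f$ with $R_{\max}(t_f)<1$. You first argue (correctly in spirit) that the \emph{actual} spectral abscissa $\sigma(t)=\sigma(\tilde S(t)B(o(t))-G(o(t)))$ must eventually be nonpositive, but then you silently replace it by $\sigma_{\max}(t_f)$, i.e., by the worst-case abscissa computed with $o\equiv o_{\min}=\boldsymbol 0$. Since $\sigma_{\max}(t)\ge\sigma(t)$, nonpositivity of $\sigma(t)$ says nothing about $\sigma_{\max}$, and your stated justification --- that $R_{\max}$ depends continuously on $s$ and $s(t)\to s_e$ --- only gives $R_{\max}(t)\to\rho\bigl(G_{\min}^{-1}\tilde S_e B\bigr)$, which need not be below $1$. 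Concretely, if opinions are high throughout (say $o(0)=\boldsymbol 1_n$, large $\gamma_i$ relative to $\gamma_{\min}$, small $\beta_{\min}$ relative to the $\beta_{ij}$), the epidemic dies out while $s_e$ stays close to $\boldsymbol 1_n$, and then $R_{\max}(t)\approx\rho(B)/\gamma_{\min}>1$ for all $t$; Corollary~\ref{cor:c1} is never applicable and the exponential-rate half of your argument collapses. (Also, $\sigma(t)\ge\varepsilon$ on a bounded interval does not by itself contradict $x(t)\to\boldsymbol 0$; the clean version of that step uses the limit matrix.) The repair is to run the same left-eigenvector estimate with the limiting matrix $\tilde S_e B(o_e)-G(o_e)$ and its abscissa $\sigma_e$ (shown $\le 0$ by your own contradiction argument, with strict negativity --- or a separate treatment of the borderline case $\sigma_e=0$ --- still needing justification), rather than with $\sigma_{\max}$. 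To be fair, the paper's own proof also invokes Corollary~\ref{cor:c1} without verifying that such a $t_f$ exists, so you have reproduced its looseness; but the explicit continuity argument you give for that step is not valid as stated.
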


\begin{proof}
By combining Corollaries \ref{cor:c1} and \ref{cor:c2} with Theorem \ref{thm:1}, we can conclude that for any sufficiently small time interval $[t_0,t_1]$, we can find a weighted average of the infected epidemic state changes exponentially fast. Additionally,  Corollary \ref{cor:c1} indicates that the epidemic state will converge to zero exponentially fast, after passing $t_f$. 
Therefore, We can conclude that the epidemic state $x(t)$ always converges to $\boldsymbol{0}$ exponentially fast. From \eqref{eq:S-O-C}, under the condition that $x(t)$  converges to $\boldsymbol{0}$ exponentially fast, $s(t)$ will converge to an equilibrium point exponentially fast. Consider the convergence of the opinion states $o(t)$, in  \eqref{eq:O-I-C}. The linear system 
\vspace{-1.5ex}
\[
\dot{o}\left(t\right)=-\left(\bar{L}+I_{n}\right)o\left(t\right)
\]

\vspace{-1.5ex}

\noindent
converges to zero exponentially fast due to the fact that all the eigenvalues of the system matrix $-(\bar{L}+I_{n})$ are smaller than zero. Note that \eqref{eq:O-I-C} is input-to-state stable since the linear system mentioned above has a unique globally stable equilibrium at $o=\boldsymbol{0}$. Therefore, 
treating $s(t)$ as an input to \eqref{eq:I-O-C} that
converges to $s_e$ exponentially fast, we have that $o(t)$ converges to $o_e$ exponentially fast\cite{khalil2002nonlinear}. Thus, we have proved the theorem.
\end{proof}
\noindent
Combined with Theorem \ref{thm:1}, Corollaries \ref{cor:c1} and \ref{cor:c2}, Theorem \ref{thm:2} implies that the epidemic will die out eventually, 
but the effective reproduction number $R_o(t)$ will determine whether there will be an outbreak or the epidemic will die out directly.

\vspace{-2ex}
	
\section{Simulation}	

\vspace{-1ex}

In the section, we will illustrate the main results developed in this work via simulations. Consider the epidemic coupled
with opinions spreading over ten communities. 
The epidemic and opinion spreading network satisfies Assumption \ref{A1}, and share the same graph topology $\mathcal{G}$ as shown in Fig.~\ref{fig_graph}.
\begin{figure}
\begin{centering}
\includegraphics[trim = 7.5cm 7.2cm 4.8cm 6.15cm, clip, width=\columnwidth]{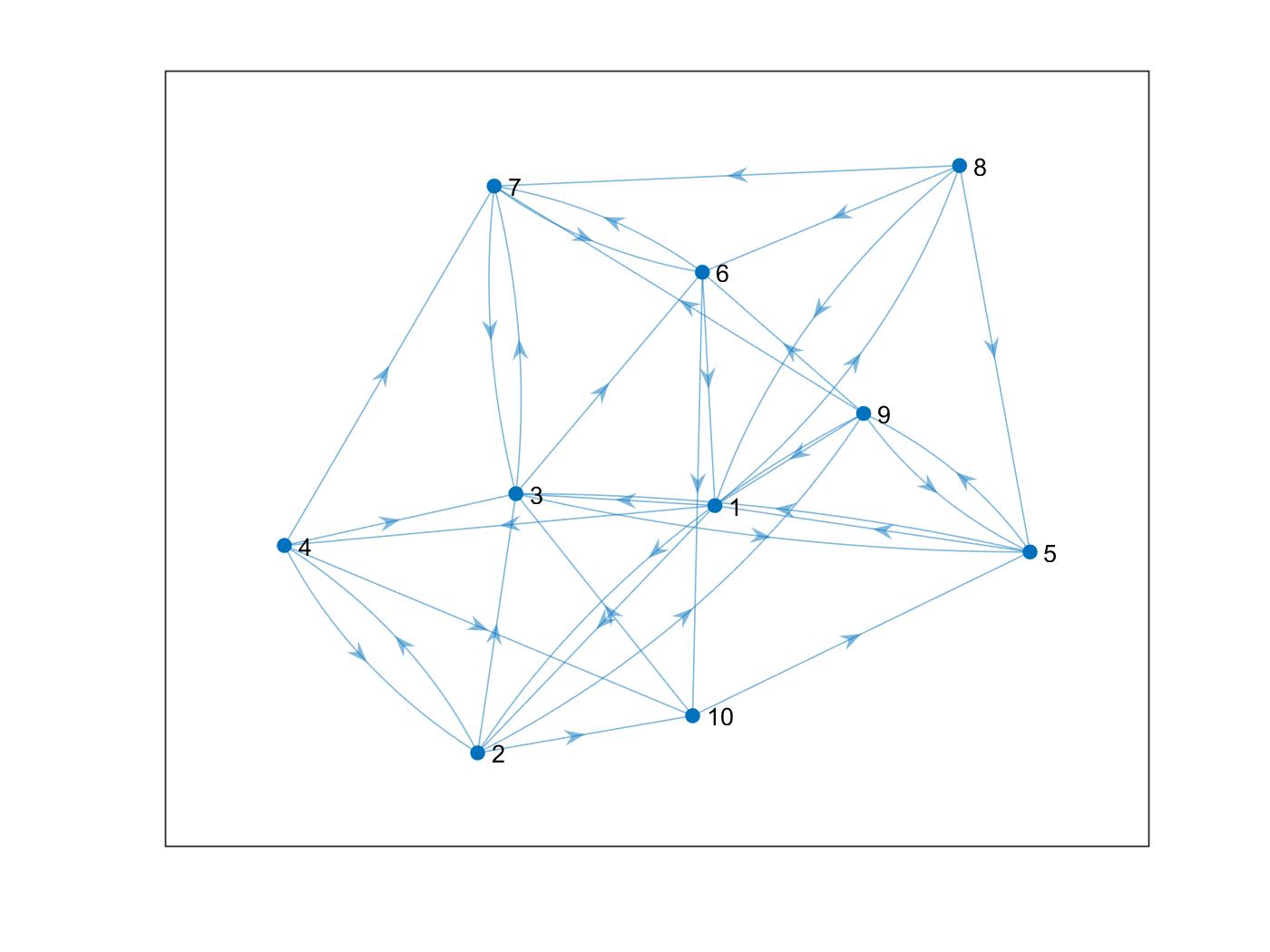}
\par\end{centering}
\centering{}\caption{The graph topology $\mathcal{G}$ of the simulated epidemic and opinion interactions
}
\label{fig_graph}
\end{figure}
\begin{figure}
\begin{centering}
\includegraphics[width=\columnwidth]{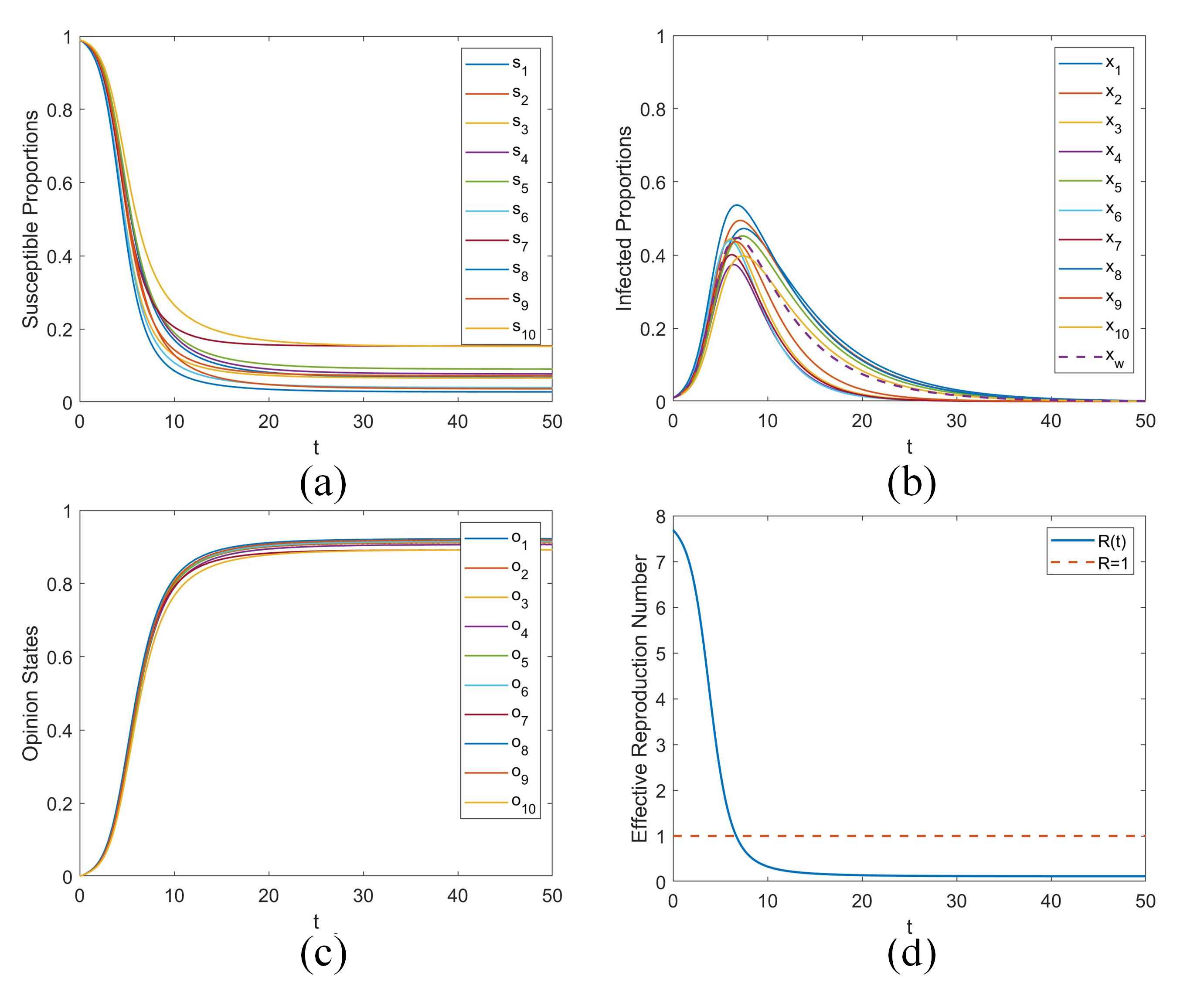}
\par\end{centering}
\centering{}\caption{Typical evolution of epidemics and opinions}
\label{simulation1}
\vspace{-2ex}
\end{figure}

We set the initial condition $x(0)=0.01\times \boldsymbol{1}_n$, $s(0)=0.99\times \boldsymbol{1}_n$, and $o(0)= \boldsymbol{0}$. We also set the parameters $\beta_{\min}=0.2$, $\gamma_{\min}=0.07$, and each $\beta_{ij}$ is uniformly sampled from $[0.2,1]$. 
Similarly, each $\gamma_i$ is uniformly sampled from $[0.07, 0.1]$. 
%
We apply only unit edge weights to the opinion graph in all simulations.

Fig. \ref{simulation1}(a) shows that the proportion of the susceptible population in all communities decreases monotonically as claimed in Lemma~\ref{lem:decreasing}, and Fig. \ref{simulation1}(b) shows the evolution of the epidemic states, with the weighted average state $x_w(t)=p^\top(t_p)x(t)$ $\forall t\geq0$ being captured by the dashed line. Note that we use $p^\top(t_p)$ for the entire time interval. Furthermore, the trend of the weighted average of the epidemic states follows the changes of the effective reproduction number $R_o(t)$ in Fig. \ref{simulation1}(d): $x_w(t)=p^\top(t_p)x(t)$ is increasing when $R_o(t)>1$; $x_w(t)=p^\top(t_p)x(t)$ is decreasing when $R_o(t)<1$. Then, $x_w(t)=p^\top(t_p)x(t)$ reaches a local peak when $R_o(t)=1$. Fig. \ref{simulation1} (a)-(d)
illustrate the behavior of the system in \eqref{eq:S-O-C}-\eqref{eq:O-I-C} based on the effective reproduction number $R_o(t)$ and the peak infection time $t_p$, as we proved in Theorem~\ref{thm:1}, Corollaries~\ref{cor:c1} and \ref{cor:c2}. Additionally, Fig. \ref{simulation1}(c) shows that, 
at the beginning stages of the epidemic, when no community considers the epidemic as a threat, 
as the susceptible population, the beliefs in the seriousness of the epidemic will increase. Meanwhile, 
as
the susceptible population decreases and 
the opinion states increase, the effective reproduction number $R_o(t)$ decreases, which aligns with Proposition \ref{prop:Spe_r}. As stated in Theorem \ref{thm:2}, 
the states of the system converge to zero exponentially fast. 

\begin{figure}
\begin{centering}
\includegraphics[width=\columnwidth]{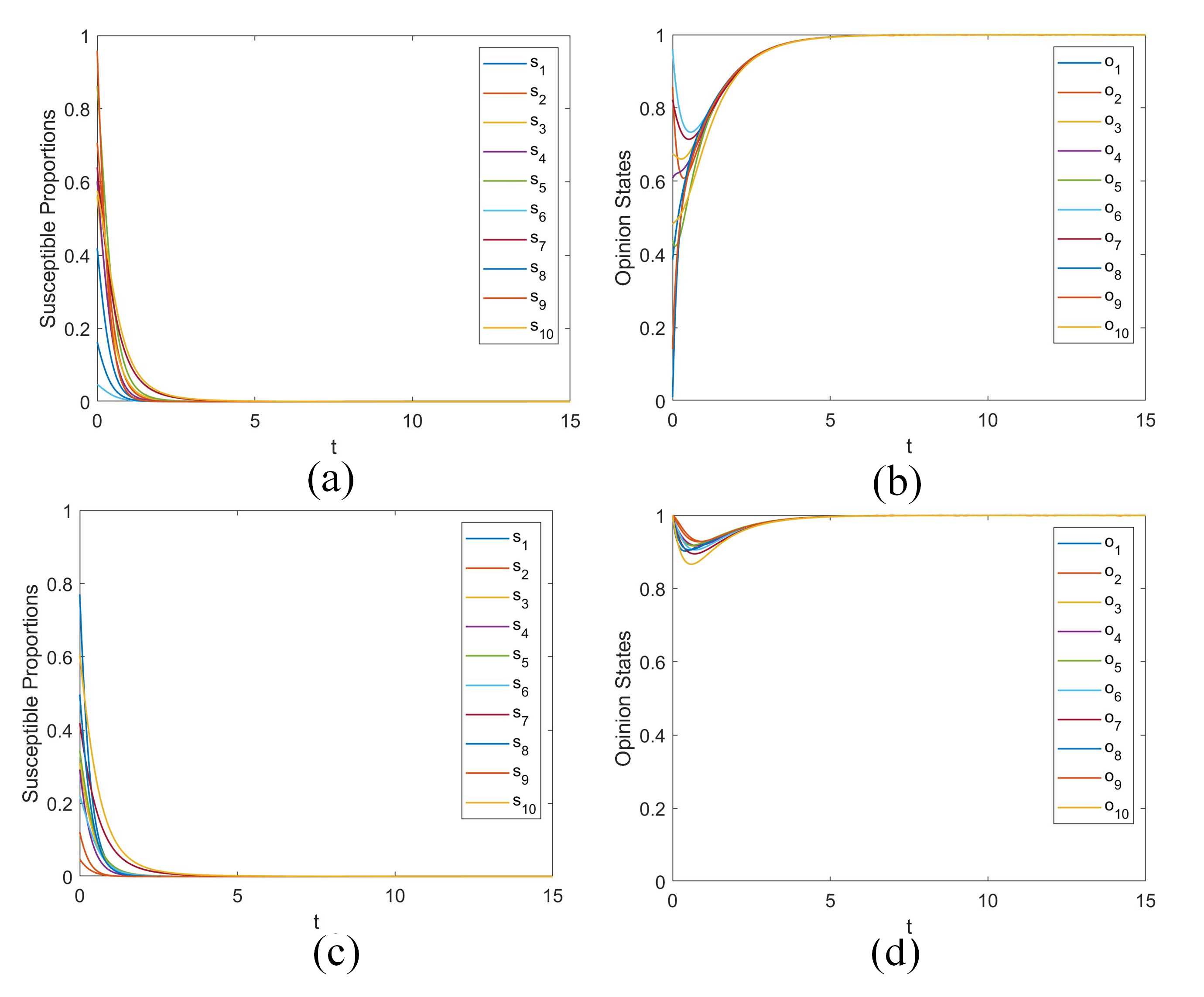}
\par\end{centering}
\centering{}\caption{States convergence with wide-spread initial opinions}
\vspace{-2ex}
\label{simulation2}
\end{figure}

Next, we will show the special case 
where 
the opinions reach consensus. As mentioned in Remark 1, the opinion states will reach consensus at the equilibrium if and only if all the communities have the same infection level. 
We set $\beta_{\min}=0.1$ and $\gamma_{\min}=0.14$, while 
$\beta_{ij}$ and $\gamma_i$ are uniformly sampled from $[0.1,0.6]$ and $[0.14, 0.30]$, respectively to generate plots in Fig. \ref{simulation2}.
In Fig. \ref{simulation2}(a) and (c), the initial conditions of the epidemic states are uniformly sampled from $[0,1]$. In Fig. \ref{simulation2}(b) we randomly sample the initial opinion states from $[0,1]$, and In Fig. \ref{simulation2}(d)
we set the initial opinion states as $o(0)=\boldsymbol{1}_n$. 
Both Fig. \ref{simulation2}(a) and (c) 
capture the extreme case where everyone in the population 
becomes
infected, i.e., where the susceptible states converge to $s_e=\boldsymbol{0}$. Therefore, based on Lemma \ref{consensus}, the opinion states at the equilibrium will take the form $o_e= \boldsymbol{1}_n- s_e=\boldsymbol{1}_n$, captured by Fig. \ref{simulation2}(b) and (d). The simulations demonstrate that, when reaching agreement after the 
epidemic dies out,
the evaluations on the seriousness of the epidemic can reflect the infection level.

Fig. \ref{S3} aims to show that the effective reproduction number $R_o(t)$ 
may not decrease monotonically, unlike the classical SIR model, as stated before.
We set $\beta_{\min}=0.01$ and $\gamma_{\min}=0.05$, while $\beta_{ij}$ and $\gamma_i$ are uniformly sampled from $[0.01,0.4]$ and $[0.05, 0.1]$, respectively. 
We assume initial opinions  $o(0)=\boldsymbol{1}_n$ as shown in Fig. \ref{S3}(c), and we sample the initial infected proportion for each community from $[0.3,0.6]$ randomly.
In Fig.~\ref{S3}(d), 
since $R_{\min}(0)<1$, the weighted average $p^\top(t_p) x(t)$ decrease at the beginning stages of the outbreak. However, 
the communities soon realize the epidemic is not as severe as they have evaluated as captured in Fig. \ref{S3}(c).
As the opinion states decrease, $R_o(t)$ increases, causing the weighted average $p^\top(t_p) x(t)$ to increase again, captured by Fig. \ref{S3}(b)-(d). 
In
Fig.~\ref{S3}(d), we observe that there are two peak candidates where 
$R_o(t)=1$; we rule out the first by Theorem~\ref{thm:1},
which states that peak infection time must satisfy 
$R_{o}(t)>1$, for $t\in [t_0, t_p)$ and $R_{o}(t)<1$, for $t \in (t_p, t_1]$, 
for $t_p-t_0>0$ and  $t_1-t_p>0$ 
sufficiently small.
However, 
the second 
point where
$R_o(t)=1$ is a peak infection time,
consistent with
Fig. \ref{S3}(b) and (d).
Lastly, Fig. \ref{S3}(d) illustrates that $R_o(t)$ is not monotonically decreasing, compared to the effective reproduction number in the classical SIR model~\cite{van2002reproduction}.

\begin{figure}
\begin{centering}
\includegraphics[width=\columnwidth]{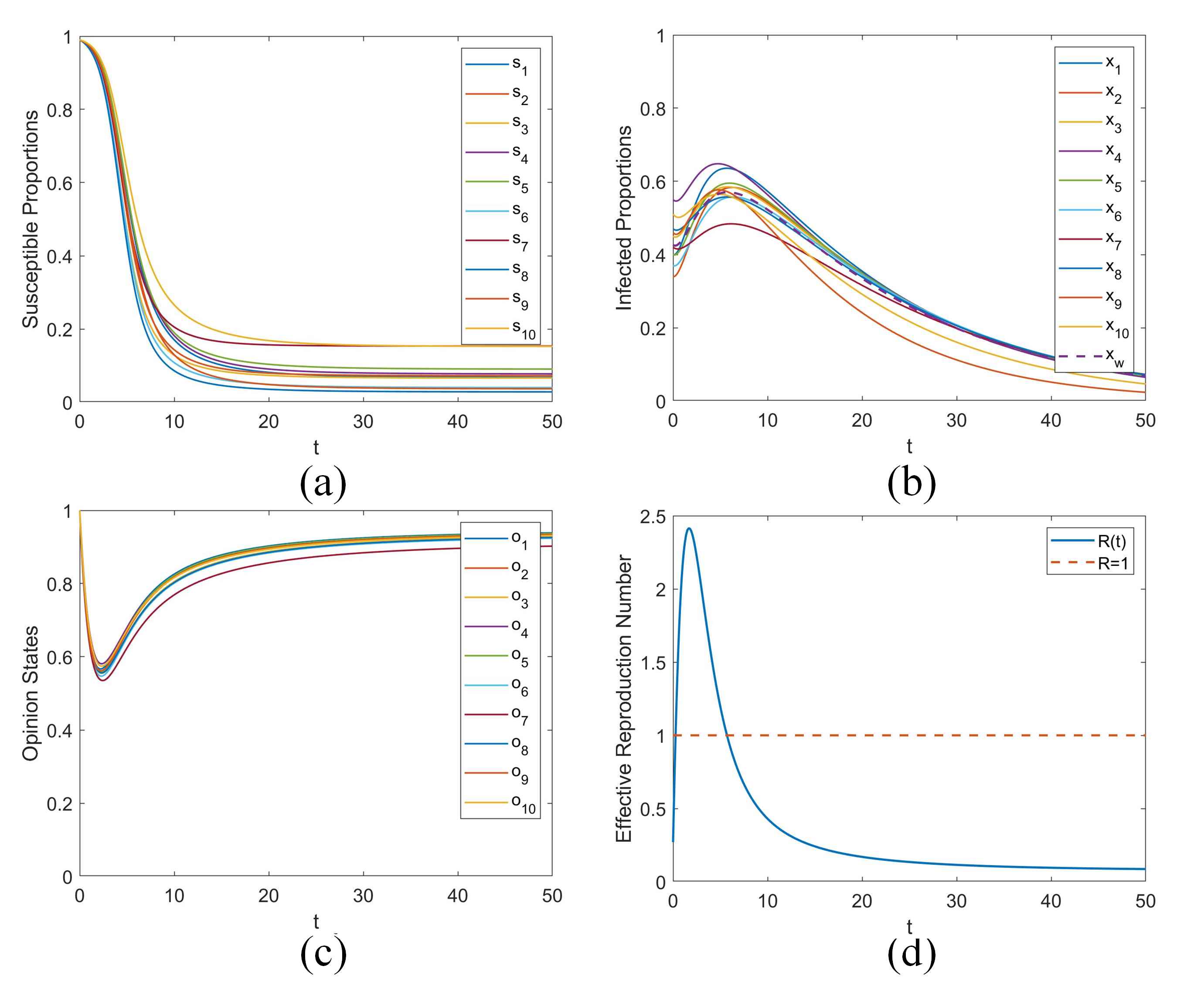}
\par\end{centering}
\centering{}\caption{State dynamics with rebouncing $R_o(t)$}
\vspace{-2ex}
\label{S3}
\end{figure}

\vspace{-1.5ex}

\section{Conclusion}

\vspace{-1ex}

In this work, we develop a networked SIR model coupled with opinion dynamics to study 
epidemic spreading processes over 
multiple
communities. We 
define
the effective reproduction number and peak infection time 
to characterize 
the
transient behavior of the epidemic. We also 
study 
the convergence time to the equilibria. Additionally, we discover that the opinion states at the equilibria can reflect the infection level of each community to some degree. The current work can be further extended to study the influence of the structures of the opinion spreading networks on the behavior of the system. Another potential future research direction is to
design control algorithms that influence the opinions
to change the behavior of the epidemic.

\vspace{-1.5ex}

\normalem
\bibliographystyle{IEEEtran}
\bibliography{IEEEabrv,main}


\end{document}